\documentclass[journal]{IEEEtran}
\ifCLASSINFOpdf
\else
\fi
\usepackage{algorithm}
\usepackage{algpseudocode}
\usepackage{amsmath, amssymb}
\usepackage{mathtools}
\usepackage{algorithm}
\usepackage{algpseudocode}
\usepackage{amsmath,amssymb,mathtools}
\usepackage{xcolor}
\usepackage{bm}
\usepackage{amsmath, amsthm}

\usepackage{booktabs}
\usepackage{amsthm}
\newtheorem{lemma}{Lemma}
\usepackage{siunitx}
\usepackage{amsthm}
\usepackage{caption}
\newtheorem{proposition}{Proposition}
\usepackage{graphicx}
\usepackage{placeins}
\usepackage{subcaption}
\begin{document}
%
\title{CRB-Guided Sensing and Resource Allocation for Human Pose Prediction in Integrated Sensing, Communication, and Computation Systems}
%
%
%

\author{Zhonghao Liu, Yahao Ding, Jiaxiang Wang, Zhaohui Yang, Abdol Hamid Aghvami,~\IEEEmembership{Fellow,~IEEE},
        and Mohammad Shikh-Bahaei,~\IEEEmembership{Senior Member,~IEEE}
\thanks{Zhonghao Liu, Yahao Ding, Jiaxiang Wang, Abdol Hamid Aghvami, and Mohammad Shikh-Bahaei are with the Department of Engineering, King's College London, London, UK (emails: zhonghao.liu@kcl.ac.uk; yahao.ding@kcl.ac.uk; jiaxiang.wang@kcl.ac.uk; hamid.aghvami@kcl.ac.uk; m.sbahaei@kcl.ac.uk).}
\thanks{Zhaohui Yang is with the College of Information Science and Electronic Engineering, Zhejiang University, Hangzhou, Zhejiang 310027, China, and Zhejiang Provincial Key Lab of Information Processing, Communication and Networking (IPCAN), Hangzhou, Zhejiang, 310007, China (email: yang\_zhaohui@zju.edu.cn).}
}



\maketitle

\begin{abstract}
Integrated sensing, communication, and computation (ISCC) provides a promising framework for indoor human-centric applications. In these applications, short-term human pose prediction facilitates continuous human pose tracking and proactive resource allocation. This paper proposes a Cramér–Rao bound (CRB)-guided sensing framework and investigates a problem of minimizing prediction error in resource-constrained ISCC systems. Specifically, a pose prediction model (ET-Mamba) is first developed to predict human joint positions for continuous tracking. To account for computation-resource limitations, lightweight prediction heads are attached to different inference layers, enabling adaptive-depth pose prediction. A CRB-guided perturbation strategy is then introduced to translate sensing uncertainty at different sensing SNR levels into point-cloud perturbations. Based on that, an empirical relationship among pose prediction error, sensing SNR, and model inference depth is established. Furthermore, to improve prediction accuracy under limited resources, this paper formulates a resource allocation optimization problem that minimizes the pose prediction error by jointly optimizing the beamforming matrix, model inference depth, and computation frequency. To solve this mixed-integer non-convex optimization problem, we propose an alternating optimization (AO)-based algorithm, where closed-form updates and semidefinite programming (SDP) are integrated into the iterative solution process. Simulation results show that the proposed method effectively improves pose prediction performance by up to 35\% under resource constraints, verifying the effectiveness of conducting joint sensing, communication, and computation design in ISCC systems.
\end{abstract}

\begin{IEEEkeywords}
ISCC, CRB, mmWave, Mamba, pose prediction, human-centric, resource allocation.
\end{IEEEkeywords}

%
\IEEEpeerreviewmaketitle

\section{Introduction}
Integrated sensing, communication, and computation (ISCC) is regarded as a promising technological framework for supporting intelligent human-centric applications in next-generation wireless networks \cite{10812728, chen2025sensing}. By integrating sensing, communication, and computation into a unified platform~\cite{9737357}, ISCC systems can support real-time indoor applications such as human pose tracking, gesture recognition, and activity monitoring~\cite{chen2025sensing, liu2025integrated}. Short-term human pose prediction plays a key role in indoor human-centric sensing, as it enables continuous pose tracking and provides temporal motion cues for high-level semantic understanding of human activities \cite{wu2024mmhpe}.

For these sensing tasks, mmWave sensing, owing to its fine spatial resolution and non-contact sensing capability, provides an effective modality for human-centric perception \cite{wu2024mmhpe,engel2025advanced}. In mmWave-based human sensing, reflected echoes are typically processed into sparse point clouds, which encode spatial and motion-related information of human body movements. By exploiting temporal dependencies across historical point-cloud observations, future human joint positions can be inferred to support short-term pose prediction.
Accurate short-term pose prediction relies on effective temporal dependency modeling over historical point-cloud sequences.


Existing studies related to this work can be broadly categorized into two research lines: physical-layer resource allocation and AI-driven mmWave point-cloud-based human sensing. The first line focuses on physical-layer optimization in ISAC/ISCC systems, spanning multiple directions including resource allocation \cite{10008661, 10556582, yao2025hybrid, yao2026energy}, interference management \cite{liu2023snr, 10626219}, signal design \cite{li2024mimo}, and joint transceiver optimization \cite{11030617, 10734321}. These works typically aim to improve communication quality of service (QoS)~\cite{zhang2025communication} or optimize sensing metrics, such as sensing SINR, resolution, and the Cramér–Rao bound (CRB)~\cite{10008661,10251151,10217169}. Recent studies have also investigated learning-aware resource allocation in wireless networks~\cite{wang2026semantic, wang2025generative, liu2026crb, zhang2026tt}. Specifically, the work~\cite{liu2023snr} took the residual self-interference under imperfect cancellation into consideration and maximized the communication rate subject to the CRB and SNR constraints. The work~\cite{10251151} proposed an algorithm for multi-target sensing with known or unknown target prior information, which minimizes the CRB while satisfying the communication QoS. The work~\cite{10217169} proposed an algorithm to optimize the BS transmit covariance matrix, trading off sensing CRB against the communication rate. 

Another line of research focuses on AI-driven mmWave point-cloud-based human sensing. Compared with physical-layer optimization studies, human pose prediction studies usually evaluate sensing performance using metrics such as Mean Per Joint Position Error (MPJPE) and Procrustes-Aligned Mean Per Joint Position Error (PA-MPJPE), which directly measure the accuracy of predicted joint positions. Existing studies mainly concentrate on real-time human pose estimation, aiming to obtain joint positions from sparse and noisy radar point clouds~\cite{wu2024mmhpe,tang2025gf,11036600}. Specifically,~\cite{wu2024mmhpe} proposed mmHPE, which generates stable and accurate point clouds by enhancing target boundary detection, and performs human pose estimation in a multi-scale manner. In~\cite{tang2025gf}, the authors proposed GF-DecNet, which integrates set abstraction and geometry-aware feature decoupling to extract local geometric and global contextual features from radar point clouds. The work~\cite{11036600} proposed mmPoint-Attention, which achieves accurate 3D joint estimation and action classification by contextual information and attention mechanisms. Attention-based architectures have shown strong representation capability, but their quadratic complexity with respect to the sequence length may introduce high computational overhead and inference latency. Mamba-based state-space models provide a computationally efficient alternative by enabling linear-complexity sequence modeling through selective state updates. 
{Recent studies have explored Mamba architectures for point-cloud understanding from different perspectives~\cite{liang2024pointmamba, diao2025zigzagpointmamba, liu2025mamba4d, chen2025stpm}. Specifically,~\cite{liang2024pointmamba} first introduced Mamba into point cloud analysis and proposed the PointMamba architecture. The work~\cite{diao2025zigzagpointmamba} improved point cloud serialization through a spatially continuous zigzag scanning strategy. In~\cite{liu2025mamba4d}, the authors extended Mamba to dynamic point cloud video understanding by modeling spatial and temporal features. The work~\cite{chen2025stpm} proposed a spatial-temporal bidirectional Mamba architecture to jointly model spatial structures and temporal dependencies.}

Although these studies on physical-layer optimization provide effective schemes, they do not explicitly consider the downstream sensing error induced by resource allocation. Meanwhile, although these learning-based mmWave sensing methods have significantly improved performance through model architecture design, they do not consider the impact of limited resources on downstream pose prediction error. In practical resource-limited ISCC systems, pose prediction performance is jointly affected by sensing and computation resources. Resource allocation decisions may induce sensing SNR fluctuations, leading to stochastic point-cloud jittering before model inference~\cite{quang2026diffusion}. Meanwhile, limited computation resources may prevent full-depth inference within each operating slot. These factors jointly degrade pose prediction accuracy and may further affect future tracking performance.

This indicates a clear research gap: existing studies still lack a quantitative analysis that links point-cloud perturbation, computation-limited inference, and the performance of the sensing task in mmWave ISCC systems.
To address the above issues, this paper proposes a CRB-guided framework design and resource allocation scheme for mmWave human pose prediction in ISCC indoor systems. The central objective of this work is to minimize the total pose prediction error through joint optimization of sensing, communication, and computation resource allocation. The main contributions of this paper are summarized as follows:
\begin{itemize}
   \item We propose a CRB-guided mmWave human pose prediction and tracking framework for resource-constrained ISCC systems. In this framework, an ET-Mamba-based prediction model is developed to predict future human joint positions from historical mmWave point-cloud sequences, where lightweight prediction heads are attached after different inference layers to support adaptive model-depth execution under computation constraints.
   \item We formulate a joint resource allocation problem for minimizing the total pose prediction error to support continuous human pose prediction and tracking. To the best of our knowledge, this is the first work that jointly considers point-cloud perturbation, model inference depth, and resource allocation in ISCC systems.
  \item To solve this problem, we first develop a sigmoid-variant empirical model to characterize the relationship among sensing SNR, model inference depth, and pose prediction error. Based on this model, we propose an AO-based algorithm, where closed-form solutions and SDP are integrated to solve the formulated problem efficiently. 
  \item Simulation results indicate that the proposed ET-Mamba achieves an MPJPE of 3.24~cm and a PA-MPJPE of 2.20~cm, outperforming representative point-cloud learning baselines. The proposed CRB-guided framework enhances prediction robustness, reducing the MPJPE by approximately 73.9\% in the low-SNR regime. Moreover, the proposed optimization method reduces the MPJPE by up to approximately 35\% compared with the baseline.
\end{itemize}

The rest of the paper is organized as follows. The system model is described in Section~II. The CRB-guided pose prediction framework is proposed in Section~III. The optimization problem is formulated in Section IV, and solution process is presented in Section V. The results and analysis of the proposed method are described in Section~VI. Finally, the
conclusions are drawn in Section VII. {Throughout this paper, $\mathrm{Tr}(\cdot)$, $\|\cdot\|_2$, and $\odot$ denote the trace of a matrix, the $\ell_2$ norm, and the Hadamard product, respectively. The main notations used throughout this paper are summarized in Table~\ref{tab:notations}.}

\begin{figure}[!t]
    \centering   \vspace{-1.3em}\includegraphics[width=0.45\textwidth]{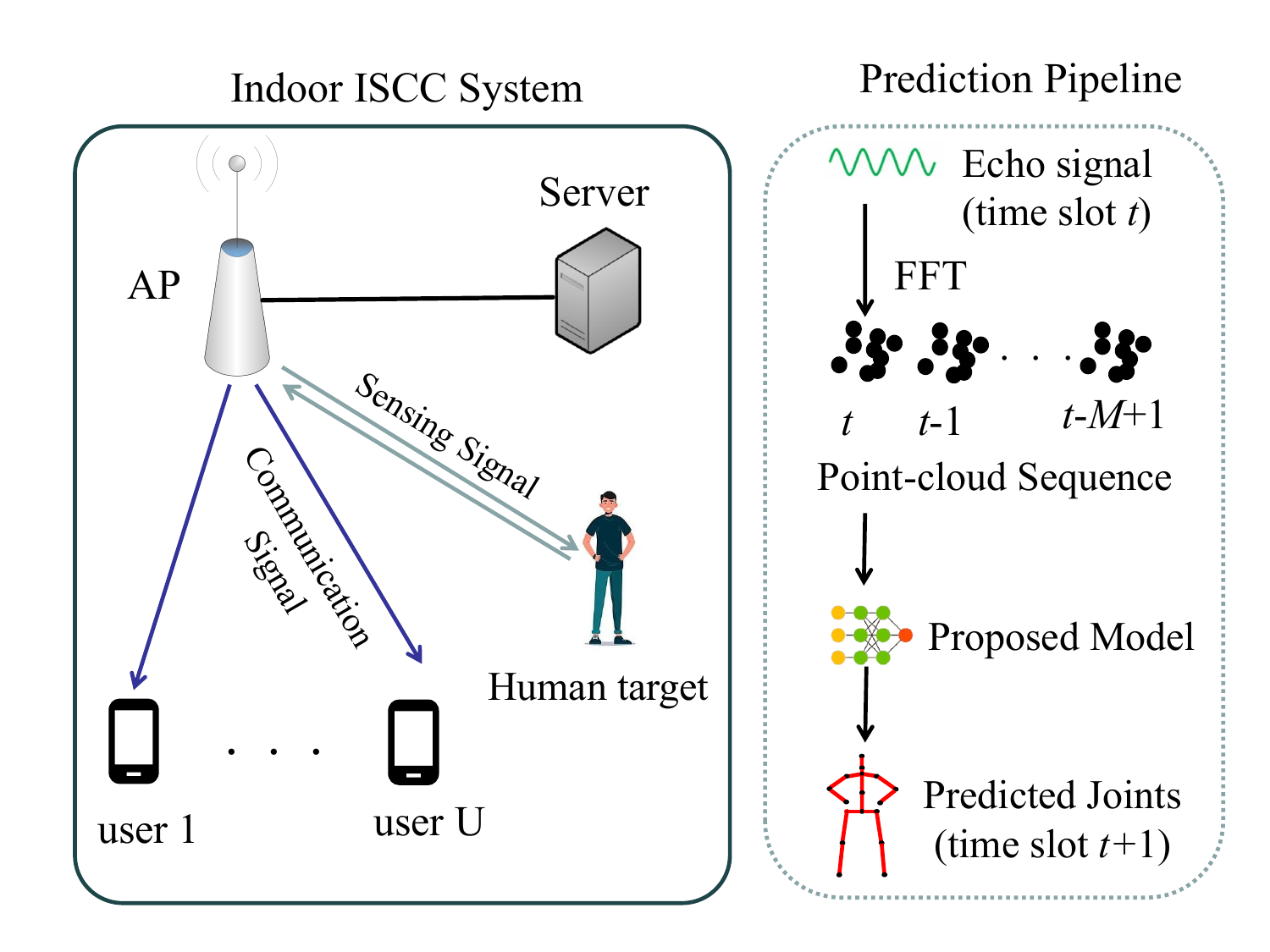}  
    \caption{Illustration of the considered indoor mmWave ISCC system for human pose prediction and tracking.}
    \vspace{-1.8em}
    \label{fig:1}
\end{figure}

\section{System Model}

As shown in Fig.~\ref{fig:1}, a mmWave access point (AP) equipped with a uniform planar array (UPA) is deployed in the indoor ISCC system to support three capabilities: high-resolution sensing for the human target, communication with $U$ users, and local computing for point-cloud generation and pose prediction. At the beginning of each time slot, the system adaptively adjusts the resources according to the predicted human joint positions to ensure effective coverage of all human joints and maintain the human body within the sensing region. {The system serves communication users $u\in\mathcal{U}$ and operates over time slots indexed by $t\in\mathcal{N}$.} 

Specifically, as shown in Fig.~\ref{fig:time}, the operational workflow of the considered ISCC system within one time slot consists of the following four stages:
\begin{itemize}
    \item[1)] Sensing: During the sensing duration of time slot $t$, the AP transmits sensing signals toward the human target based on the predicted joint positions at time slot $t-1$. 
    \item[2)] Point-cloud generation: {During the sensing process, the AP continuously receives echoes reflected from the human body, while the server processes the received sensing signals online to generate mmWave point clouds.}
    \item[3)] Inference: {After point-cloud generation is completed, the server uses the generated point clouds together with historical point clouds to predict joint positions for time slot $t+1$ of the sensing target, providing prior information for resource allocation in the next time slot.} 
    \item[4)] Communication: Throughout time slot $t$, the AP maintains data transmission with multiple users.
\end{itemize}
\begin{figure}[!t]
    \centering    \includegraphics[width=0.4\textwidth]{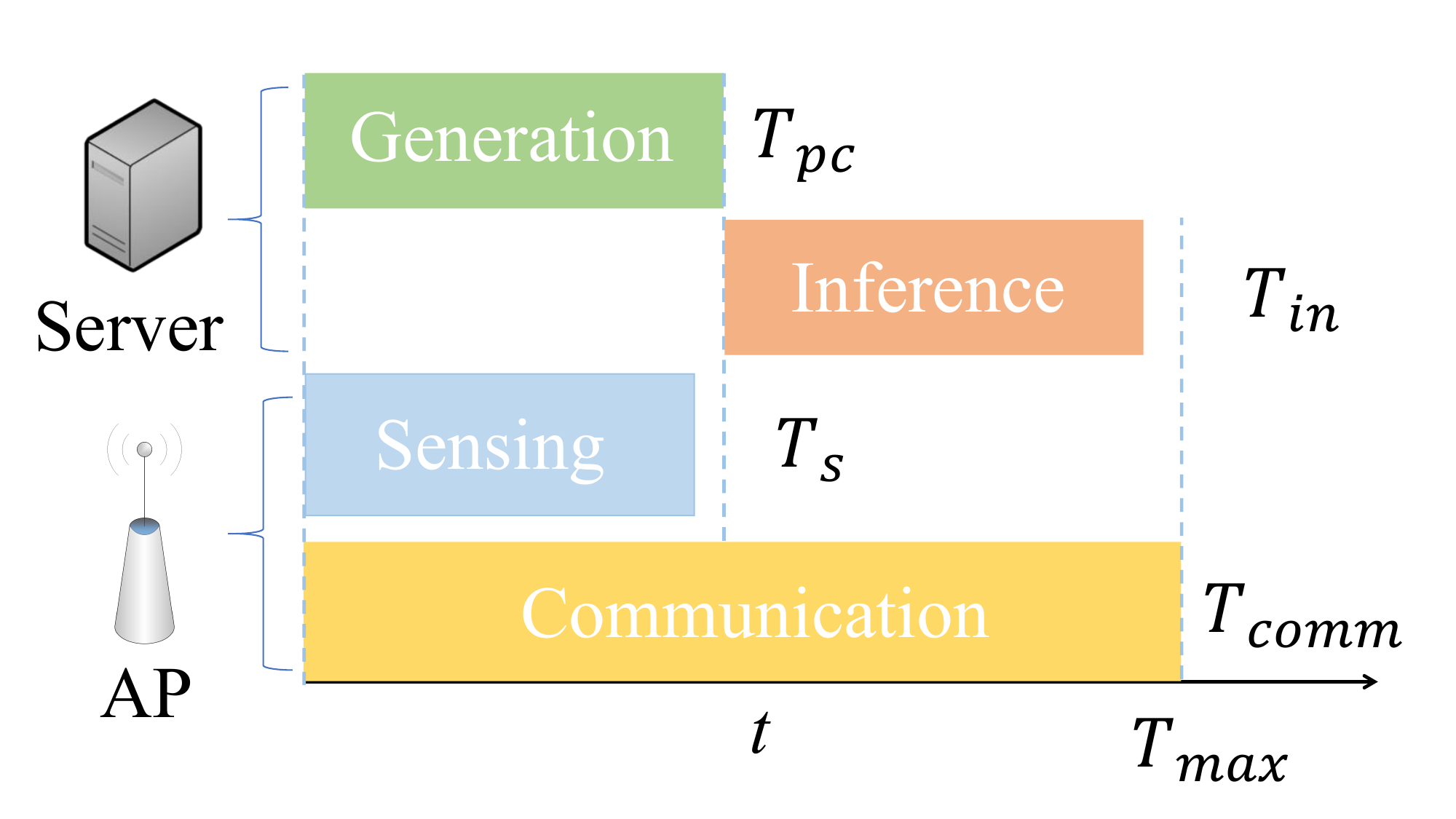}  
    \caption{Time slot structure of the considered ISCC system}
    \label{fig:time}
    \vspace{-1.5em}
\end{figure}

Without loss of generality, the UPA is assumed to be placed on the \(Oxz\) plane and consists of \(N_t=N_xN_z\) antennas. To eliminate mutual interference between communication and sensing, the system achieves decoupling through a frequency division multiplexing (FDM) strategy, whereby a dedicated frequency band is allocated for sensing tasks. At each time slot \( t \), the AP transmits a signal that consists of both communication and sensing components, which can be written as  
\begin{equation}
\mathbf{x}[t] =  \mathbf{w}_c \mathbf{s}_c[t] +  \mathbf{w}_r \mathbf{s}_r[t],
\end{equation}
{where \( \mathbf{s}_c[t] \in \mathbb{C}^{U \times 1} \) denotes the communication symbol vector for \( U \) users at time slot \( t \),} and \( \mathbf{w}_c = [\mathbf{w}_{c,1}, \dots, \mathbf{w}_{c,U}]\in \mathbb{C}^{N_t \times U} \) is the communication beamforming matrix. The sensing waveform \( \mathbf{s}_r[t] \in \mathbb{C} \) is a scalar symbol transmitted for target probing, and \( \mathbf{w}_r \in \mathbb{C}^{N_t} \) denotes the sensing beamforming vector for the sensing target. The communication symbols satisfy the orthogonality and normalization condition \( \mathbb{E}[\mathbf{s}_c[t] \mathbf{s}_c^H[t]] = \mathbf{I}_U \), while the sensing waveform is power-normalized as \( \mathbb{E}[|\mathbf{s}_r[t]|^2] = 1 \). $\mathbf{s}_c[t]$ and $\mathbf{s}_r[t]$ are statistically independent, satisfying \( \mathbb{E}[\mathbf{s}_c[t] \mathbf{s}_r^H[t]] = \mathbf{0} \). 

\begin{table}[!t]
\centering
\caption{Main notations}
\label{tab:notations}
\footnotesize
\renewcommand{\arraystretch}{1.05}

\begin{tabular}{@{}p{0.18\columnwidth} p{0.76\columnwidth}@{}}
\toprule
\textbf{Symbol} & \textbf{Description} \\
\midrule

{$\mathcal{U}$}
& {Set of communication users} \\

{$U$}
& {Number of communication users} \\

{$t$}
& {Time-slot index} \\

{$\mathcal{N}$}
& {Set of time-slot indices} \\

{$N_t$}
& {Number of antennas at the AP} \\

{$N_x,N_z$}
& {Numbers of UPA elements along the $x$- and $z$-axes} \\

{$B_c$}
& {Communication bandwidth} \\

{$B_r$}
& {Sensing bandwidth} \\

{$R_u$}
& {Achievable communication rate of user $u$} \\

{$T$}
& {Duration of one time slot} \\

{$T_s$}
& {Sensing duration in one time slot} \\

{$C$}
& {Selected model inference depth} \\

{$\mathcal{C}$}
& {Set of candidate model inference depths} \\

{$C_{\max}$}
& {Maximum model inference depth} \\

{$f_{\max}$}
& {Maximum computation frequency} \\

{$\mathcal{G}$}
& {Set of final group tokens after GLF} \\

{$G$}
& {Number of local point-cloud groups} \\

{$\mathcal{J}$}
& {Set of human joints} \\

{$J$}
& {Number of human joints} \\

{$M$}
& {Number of consecutive point-cloud frames} \\

{$\mathbf{X}_i$}
& {Point-cloud data of the $i$-th frame} \\

{$\mathbf{Z}_i$}
& {Ordered token sequence of the $i$-th frame} \\

{$\mathbf{S}$}
& {Cross-temporal serialized token sequence} \\

{$N_c$}
& {Number of sensing snapshots} \\

{$N_s$}
& {FFT size for point-cloud generation} \\

{$\tau_{\mathrm{pc}}$}
& {Point-cloud generation latency} \\

{$\tau_{\mathrm{in}}$}
& {Model inference latency} \\

{$\gamma$}
& {Effective switched-capacitance coefficient} \\

{$E_s$}
& {Sensing transmission energy} \\

{$E_c$}
& {Communication transmission energy} \\

{$E_{\mathrm{pc}}$}
& {Energy consumption for point-cloud generation} \\

{$E_{\mathrm{in}}$}
& {Energy consumption for model inference} \\

{$P_{\max}$}
& {Maximum allowable average power} \\

{$m(\mathrm{SNR}_{s,o},C)$}
& {Empirical MPJPE function} \\

{$\mathbf{w}_{c,u}$}
& {Communication beamforming vector for user $u$} \\

{$\mathbf{w}_{r}$}
& {Sensing beamforming vector} \\

{$f_{\mathrm{pc}}$}
& {Computation frequency for point-cloud generation} \\

{$f_{\mathrm{in}}$}
& {Computation frequency for model inference} \\

\bottomrule
\end{tabular}
\end{table}
\subsection{Communication model}
We assume a standard half-wavelength element spacing, satisfying $d=\lambda/2$, where \(d\) and \(\lambda\) denote the element spacing and carrier wavelength, respectively. {Given the azimuth angle $\theta_u$ and elevation angle $\phi_u$ of user $u$, where $\theta_u$ is measured in the $xy$-plane from the $+y$-axis toward the $+x$-axis and $\phi_u$ is measured from the $xy$-plane toward the $+z$-axis}, the array steering vector of the UPA is defined as $\mathbf{a}_t(\theta_{u},\phi_{u}) = \mathbf{a}_x(\theta_{u},\phi_{u}) \otimes \mathbf{a}_z(\theta_{u},\phi_{u})$. {Here, $\mathbf{a}_x(\theta_{u},\phi_{u}) = [1, e^{j \frac{2\pi d}{\lambda}\sin(\theta_{u})\cos(\phi_{u})}, \ldots, e^{j (N_x-1)\frac{2\pi d}{\lambda}\sin(\theta_{u})\cos(\phi_{u})}]^T$, and $\mathbf{a}_z(\theta_{u},\phi_{u}) = [1, e^{j\frac{2\pi d}{\lambda}\sin(\phi_{u})}, \ldots, e^{j(N_z-1)\frac{2\pi d}{\lambda}\sin(\phi_{u})}]^T$.}  

Accordingly, the received signal of user $u$ at time slot $t$ can be expressed as:
\begin{equation}
y_u[t] = \mathbf{h}_{u}^H \mathbf{w}_{c,u} s_{c,u}[t] + \sum_{i \neq u} \mathbf{h}_{u}^H \mathbf{w}_{c,i} s_{c,i}[t] + n_{u}, 
\end{equation}
where $\mathbf{h}_{u}$ is the downlink channel between the AP and the $u$-th user and $n_u \sim \mathcal{CN}(0, \sigma_n^2 )$ is the Gaussian noise.
{To account for both line-of-sight (LoS) and non-line-of-sight (NLoS) propagation, the channel is modeled as $\mathbf{h}_{u}=\sqrt{\beta_u}\widetilde{\mathbf{h}}_{u}$, where $\beta_u=\beta_0(d_u/d_0)^{-\alpha}$ denotes the large-scale channel power gain, $\beta_0$ denotes the channel power gain at the reference distance $d_0=1$~m. $d_u$ is the distance between the AP and user $u$, and $\alpha$ is the path-loss exponent. The small-scale channel $\widetilde{\mathbf{h}}_{u}$ follows a Rician fading model and is expressed as $\widetilde{\mathbf{h}}_{u}=\sqrt{\frac{K_u}{K_u+1}}\mathbf{a}_{t}(\theta_u,\phi_u)+\sqrt{\frac{1}{K_u+1}}\mathbf{g}_{u}$, where $K_u$ is the Rician factor given in~\cite{han2016efficient}, $\mathbf{a}_{t}(\theta_u,\phi_u)$ represents the LoS component, and $\mathbf{g}_{u}\sim\mathcal{CN}(\mathbf{0},\mathbf{I}_{N_t})$ represents the NLoS scattering component.}

The signal-to-interference-plus-noise ratio (SINR) of user $u$ can be expressed as:
\begin{equation}
\text{SINR}_{u} = \frac{|\mathbf{h}_{u}^H \mathbf{w}_{c,u}|^2}{\sum_{i \neq u} |\mathbf{h}_{u}^H \mathbf{w}_{c,i}|^2 +\sigma_n^2}.
\end{equation}
The achievable communication rate of user $u$ is then given by $R_{u} = B_c \log_2(1 + \text{SINR}_{u})$.
\subsection{Sensing model}
At time slot $t$, the received sensing signal at the AP can be expressed as
\begin{equation}
\mathbf y_s[t] =\mathbf e[t] + \mathbf c[t]+\mathbf n_{s},
\label{eq:rx_signal}
\end{equation}
where $\mathbf e[t]$ is the echo signal at time slot $t$, $\mathbf c[t]$ denotes the static clutter component caused by environmental reflectors, and $\mathbf n_{s}$ is the additive sensing noise, modeled as $\mathbf n_s \sim \mathcal{CN}(\mathbf 0, \sigma_z^2\mathbf I_{N_t} )$.
We assume $\bm{a}_t(\theta,\phi)\in\mathbb{C}^{N_t}$ and $\bm{a}_r(\theta,\phi)\in\mathbb{C}^{N_t}$ are the transmit and receive steering vectors, respectively.
\begin{figure*}[!t]
  \centering  \includegraphics[width=0.98\textwidth]{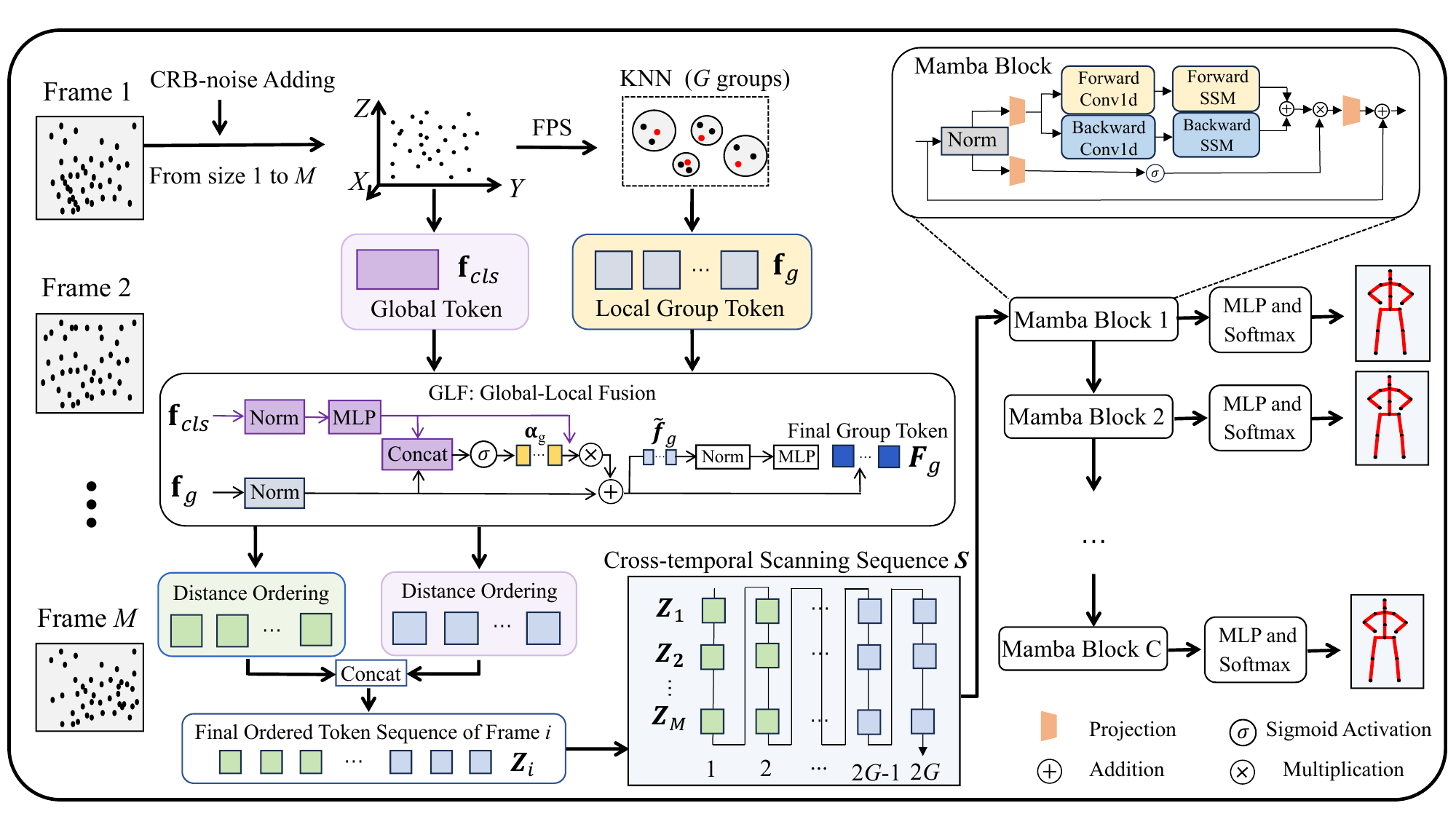}
  \caption{Overview of the ET-Mamba framework for mmWave-based pose prediction.}
  \label{fig:2}
  \vspace{-1.8em}
\end{figure*}

We model the human body as an extended target (ET) and denote its visible surface at time slot \(t\) by \(\mathcal{S}_{\text{visible},t}\).
It is partitioned into \(P\) non-overlapping regions \(\{\mathcal{S}_{p,t}\}_{p=1}^{P}\) satisfying $\mathcal{S}_{\text{visible},t} = \bigcup_{p=1}^P \mathcal{S}_{p,t}$ and $\mathcal{S}_{p_1,t} \cap \mathcal{S}_{p_2,t} = \varnothing$ for all $p_1 \neq p_2$. The echo signals are modeled as the superposition of signals scattered from the visible surface elements, while the contribution from internal reflections is negligible due to severe penetration loss. {Due to human motion and self-occlusion, the visible surface and the corresponding effective scatterers may vary across time slots. Following~\cite{10628004}, each patch on the currently visible surface is modeled by a single scatterer, which can be expressed as:}
\begin{equation}
\mathbf e[t] = \int_{\mathcal{S}_{\text{visible},t}} \mathbf e_{p}[t] \, d\rho 
\;\approx\; \sum_{p=1}^{P} \mathbf e_{p}[t],
\label{eq:echo_sum}
\end{equation}
where $e_{p}[t]$ denotes the echo from the $p$-th scatterer at time slot $t$.
By superposing all reflected components, the overall echo signal is expressed as
\begin{align}
\small
\mathbf {e}[t]
&= \sum_{p=1}^{P}
\beta_{p,t} \zeta
\mathbf{a}_r(\theta_{p,t}, \phi_{p,t})
\mathbf{a}_t^H(\theta_{p,t}, \phi_{p,t})
\mathbf{w}_r \mathbf{s}_r\!\left(t-\tfrac{2d_{p,t}}{c}\right)
,
\label{eq:total_echo}
\end{align}
where $\beta_{p,t}$ is the path-loss coefficient at time slot $t$, $\zeta$ is the complex reflection coefficient, $d_{p,t}$ is the propagation distance at time slot $t$, 
$\theta_{p,t}, \phi_{p,t}$ denote the corresponding angles at time slot $t$, respectively,  and $c$ is the speed of light.

We assume that the server mitigates the effect of static clutter via background suppression, and then applies Constant False Alarm Rate (CFAR) detection to identify the scattering points and eliminate isolated outlier detections.
Finally, fast Fourier transform (FFT)-based processing is applied to the received echo signals to extract the range, Doppler, and angle information of the scattering points. Based on the extracted information, the corresponding mmWave point-cloud features are constructed, including spatial coordinates and radial velocity.

\section{CRB-Guided Pose Prediction}
In this section, we present the proposed CRB-guided pose prediction framework for mmWave point-cloud sequences. The framework consists of two main components: an ET-Mamba-based pose prediction model and a CRB-guided anisotropic perturbation strategy for robustness enhancement. The ET-Mamba model exploits historical mmWave point-cloud frames to predict future human joint positions, while the CRB-guided perturbation strategy emulates sensing-quality variations caused by different resource allocation conditions.

\subsection{ET-Mamba Pose Prediction Model}
We adopt Mamba as the backbone of the pose prediction model. Mamba is built upon a selective state-space model (SSM) for sequence modeling. By replacing the quadratic self-attention with state updates, Mamba enables linear-time modeling for long sequences~\cite{chen2025stpm}. However, Mamba requires a one-dimensional token sequence as input, while mmWave point clouds are inherently unordered and sparse. Therefore, a naive serialization may weaken the geometric relationships among point groups. To preserve point-cloud structures while exploiting Mamba's sequential modeling ability, we propose ET-Mamba, whose overall framework is illustrated in Fig.~\ref{fig:2}.

{The model takes $\mathbf X=\{\mathbf X_i\}_{i=1}^{M}$ as input, where each point-cloud frame $\mathbf X_i$ contains $N_i$ points, and predicts the 3D human joint positions in the next time slot.}

Given the input point-cloud frame \(\mathbf{X}_i\), CRB-guided noise is first introduced to simulate sensing uncertainty caused by limited sensing resources. The detailed CRB-guided noise modeling is presented in Section~\ref{sec:crb_noise}.
To capture the global information of \(\mathbf{X}_i\), a global feature token \(\mathbf{f}_{cls}\) is extracted from the whole frame using a 1D convolutional layer followed by max pooling. For local geometric representation, $G$ group centers are selected by farthest point 
sampling (FPS), and the neighborhood of each center is constructed using $K$-nearest neighbors (KNN). Each local neighborhood is represented by center-normalized relative coordinates and velocity features, and is further encoded by a PointNet encoder~\cite{liang2024pointmamba} to obtain the local group token \(\mathbf{f}_{g}\).

While local group tokens preserve fine-grained geometric structures, their receptive fields are restricted to local neighborhoods, providing insufficient global contextual information. This limited receptive field may restrict each local token from capturing its relationship with the overall point-cloud distribution. To address this issue, we design a global-local fusion (GLF) module, which adaptively injects global contextual information into each local group token.

In the proposed GLF module, instead of directly concatenating or simply adding 
\(\mathbf{f}_{cls}\) and \(\mathbf{f}_{g}\), an adaptive gate \(\alpha_g\) is learned from 
their joint representation to regulate the contribution of global context to each 
local group.
The gated fusion process is formulated as:
\begin{subequations}
\begin{align}
\boldsymbol{\alpha}_g &= \sigma\!\left(\mathbf{W}_{\alpha}\,[\mathbf{f}_g;\mathbf{f}_{cls}] + \mathbf{b}_{\alpha}\right), 
\\
\mathbf{f}'_{g} &= \mathbf{f}_g + \boldsymbol{\alpha}_g \odot \psi(\mathbf{f}_{cls}),
\end{align}
\end{subequations}
where $[;]$ denotes concatenation, $\sigma(\cdot)$ is the sigmoid function, \(\mathbf{W}_{\alpha}\) and \(\mathbf{b}_{\alpha}\) are the learnable weight matrix and bias vector, respectively. $\psi(\cdot)$ is an MLP layer. 

The gate \(\alpha_g\) provides group-wise control over the injected global context. 
Therefore, each local group can selectively incorporate global information 
while maintaining its own local geometric representation. After global context 
injection, the fused local token is further refined by a residual feed-forward 
layer to obtain the final group token:
\begin{equation}
\mathbf{F}_g = \mathbf{f}'_g + \psi(\mathbf{f}'_g).
\end{equation}
The proposed GLF module enriches each local group token with global contextual information, 
providing representations that contain both neighborhood-level geometry and global point-cloud context 
for subsequent geometry ordering and temporal sequence modeling.

Since Mamba operates on one-dimensional token sequences, the spatial arrangement of point-cloud tokens directly affects the quality of sequential modeling. A naive flattening operation may weaken the geometric relationships among local groups. To preserve the spatial characteristics of mmWave point clouds, a geometry distance ordering strategy is introduced before sequence modeling. 

Let \(\mathcal{G}=\{\mathbf{F}_{1},\mathbf{F}_{2},\ldots,\mathbf{F}_{G}\}\) denote the set of final group tokens after GLF. Let \(\mathcal{Q}=\{\mathbf{q}_{g}\in\mathbb{R}^{3}\}_{g=1}^{G}\) denote the set of corresponding group centers, where \(\mathbf{q}_{g}\) is the center of the \(g\)-th local group.

To obtain a stable and geometry-aware ordering, two reference points are considered.
The first reference point is the radar coordinate origin, and the group tokens are sorted in ascending order according to the Euclidean distances between their centers and the origin. The second reference point is the centroid of all group centers, and the group tokens are sorted again according to the Euclidean distances between their centers and this centroid. 

This reference point based ordering introduces explicit geometric relationships into the serialized sequence. The ordering based on the radar origin preserves the spatial distribution of point groups with respect to the sensing coordinate system, while the ordering based on the group-center centroid describes the relative organization of local groups within the observed point clouds. 

The two ordered sequences are concatenated along the token dimension to form the ordered token sequence of the \(i\)-th frame, denoted by $\mathbf{Z}_{i}\in\mathbb{R}^{2G\times d_m}$, where $2G$ is the total number of ordered tokens and $d_m$ is the token feature dimension.
After obtaining the ordered token sequence \(\mathbf{Z}_i\) for each frame, temporal serialization is required before Mamba-based modeling. Instead of arranging tokens frame by frame, point-group tokens from different time steps 
are interleaved in the serialized sequence. Specifically, the serialized sequence is constructed as:
\begin{equation}
\mathbf{S}
=
[
\mathbf{z}_{1,1},
\mathbf{z}_{2,1},
\ldots,
\mathbf{z}_{M,1},
\mathbf{z}_{1,2},
\mathbf{z}_{2,2},
\ldots,
\mathbf{z}_{M,2G}
],
\end{equation}
where \(\mathbf{z}_{i,j}\in\mathbb{R}^{d_m}\) denotes the \(j\)-th ordered group token of the \(i\)-th frame. This design places temporally related tokens closer to each other in the one-dimensional sequence, thereby enabling the sequence model to capture cross-frame motion correlations more effectively.

After serializing the original point-cloud sequence into a one-dimensional token sequence, we employ a six-layer Mamba block for sequential feature modeling~\cite{chen2025stpm}. Considering that server-side 
power consumption and latency constraints may prevent full-depth inference, a lightweight 
prediction head is attached after each Bi-Mamba layer. Let \(\mathcal{C}=\{1,\cdots,C_{\max}\}\) denote the set of candidate model inference depths. In this way, the model can produce 
pose prediction results from intermediate layers when full-layer inference is not feasible, 
providing a flexible trade-off between inference accuracy and computational cost.
\subsection{CRB-Guided Perturbation and Training}
\label{sec:crb_noise}
To improve the robustness of the prediction model under varying sensing quality, a CRB-guided anisotropic perturbation strategy is introduced during training. Since the quality of mmWave point clouds is affected by sensing conditions, such as sensing power and beamforming, perturbations are injected into the nominal point-cloud observations to emulate degradation under different sensing conditions. Motivated by the quasi-static property of indoor human motion, we assume that the channel environment and sensing quality remain relatively stable within a short time window. We adopt a CRB-based uncertainty model, where the perturbations in range, angle, and radial velocity are represented by independent Gaussian variables with variances given by the corresponding CRB expressions.

As implied in \eqref{eq:total_echo}, the echo can be expressed as a superposition of reflections from multiple scattering centers. We assume that the echoes from different scattering centers are approximately independent. Under this assumption, at each training epoch, we first determine the SNR at the beam center, which can be expressed as~\cite{quang2026diffusion}:
\vspace{-0.5em}
\begin{equation}
\mathrm{SNR}_{s,o}
=
\frac{
|\zeta|^2
\left\|
\mathbf{A}(\theta_o,\phi_o)\mathbf{w}_r
\right\|_2^2
}{
4d_o^4\sigma_z^2
},
\label{eq:SNR_s_o}
\end{equation}
{where $\mathbf A(\theta_o,\phi_o)
\triangleq
\mathbf a_r(\theta_o,\phi_o)\mathbf a_t^H(\theta_o,\phi_o)$.}
Then, for the $i$-th point, we apply a distance-dependent path-loss correction to obtain $\mathrm{SNR}_{s,i} = \mathrm{G}_i\mathrm{SNR}_{s,o}(\frac{d_o}{d_i})^{4}$, where $\mathrm{G}_i$ is the beamforming gain, $d_i$ is the range of the point and $d_o$ is the distance from the radar to the beam center. Next, we obtain the beamforming gain by the following equation:
\begin{equation}
\mathrm{G}_i = 
\frac{\left|\mathbf a_t^{\mathrm H}(\theta_o,\phi_o)\,\mathbf a_t(\theta_i,\phi_i)\right|^2}
{\|\mathbf a_t(\theta_o,\phi_o)\|^2\,\|\mathbf a_t(\theta_i,\phi_i)\|^2}
\end{equation}

We use closed-form CRB expressions to map $\mathrm{SNR}_{s,i}$ to physically consistent noise variances for range, angle, and radial velocity, and inject independent zero-mean Gaussian perturbations accordingly. Specifically, the CRB variance for range is given by~\cite{quang2026diffusion}:
\vspace{-0.5em}
\begin{equation}
\sigma_{d,i}^{2}=\frac{c^{2}}{8\pi^{2}\mathrm{SNR}_{s,i}B_r^{2}}.
\label{eq:CRB_d}
\end{equation}

The CRB for radial velocity is adopted from~\cite{dogandzic2001cramer} as:
\vspace{-0.5em}
\begin{equation}
\sigma_{v,i}^{2}=\frac{3\lambda^{2}}{32\pi^{2}T_s^{2}\,\mathrm{SNR}_{s,i}\,N_c\left(N_c^{2}-1\right)},
\label{eq:CRB_v}
\end{equation}
\vspace{0em}
where $T_s$ is the sensing duration and $N_c$ is the sensing snapshots. The CRB of the angles can be expressed~\cite{dogandzic2001cramer} as
\begin{subequations}
\begin{align}
\small
\sigma_{\theta,i}^{2} &=
\frac{\lambda^{2}}{8\pi^{2}N_{p}\,\mathrm{SNR}_{s,i}\,\kappa ^{2}\,S_x\left(\cos\phi_i\,\cos\theta_i\right)^{2}},\\
\sigma_{\phi,i}^{2} &=
\frac{\lambda^{2}}{8\pi^{2}N_{p}\,\mathrm{SNR}_{s,i}\,\kappa^{2}\left[\,S_x\left(\sin\phi_i\,\sin\theta_i\right)^{2}+S_z\left(\cos\phi_i\right)^{2}\right]},
\label{eq:CRB_jiaodu}
\end{align}
\end{subequations}
where $N_{p}$ denotes the number of snapshots used for angle estimation, $\kappa$ is $2\pi/\lambda$, $S_x$ = $N_z d_x^{2} \frac{N_x\left(N_x^{2}-1\right)}{12}$ and $S_z = N_x d_z^{2} \frac{N_z\left(N_z^{2}-1\right)}{12}$. Consequently, the injected perturbations are sampled according to $\Delta_{d,i}\sim\mathcal N(0,\sigma_{d,i}^{2})$, $\Delta_{\theta,i}\sim\mathcal N(0,\sigma_{\theta,i}^{2})$, $\Delta_{\phi,i}\sim\mathcal N(0,\sigma_{\phi,i}^{2})$, and $\Delta_{v,i}\sim\mathcal N(0,\sigma_{v,i}^{2})$. To map the sensing-parameter perturbations back to the point-cloud coordinate domain, we transform the perturbed range and angular parameters into Cartesian coordinates. Let \(\mathbf{p}_i=(x_i,y_i,z_i)\) denote the initial coordinate of the \(i\)-th point. The perturbed coordinate $\mathbf{p}_{i}^{\prime}$ is obtained by:
\begin{equation}
\mathbf{p}_{i}^{\prime} = f(d_i + \Delta_{d,i}, \theta_i + \Delta_{\theta,i}, \phi_i + \Delta_{\phi,i}),
\end{equation}
where \(f(\cdot)\) denotes the spherical-to-Cartesian transformation, given by
$f(d,\theta,\phi)= [d\sin\theta\cos\phi,
d\cos\phi\cos\theta,d\sin\phi]^T
$.

Finally, the augmented feature of the \(i\)-th point is expressed as
$\mathbf{q}'_i = \left[\mathbf{p}'_i, v'_i\right]$, where $v'_i = v_i+\Delta_{v,i}$.

During training, the original mmWave point-cloud data are segmented using a sliding window to generate overlapping point-cloud sequence samples. Each sample consists of $M$ consecutive point-cloud frames and is used to predict the 3D human joint positions in the next frame. The network is optimized using Adam~\cite{adhikari2024misleep}. 
{Each target frame contains $J$ human joints, and the predicted and ground-truth positions of joint $j\in\mathcal J$ in the $b$-th sample are denoted by $\hat{\mathbf p}_{M+1,j}^{(b)}$ and $\mathbf p_{M+1,j}^{*(b)}$, respectively.} Let $N_B$ denote the batch size and $b$ denote the $b$-th clip in the batch. The mean squared error (MSE) is adopted as the training objective:
\begin{equation}
\mathcal{L}_{MSE}=\frac{1}{N_BJ}\sum_{b=1}^{N_B}\sum_{j=1}^{J}\bigl\|\hat{\mathbf{p}}^{(b)}_{M+1,j}-\mathbf{p}^{*(b)}_{M+1,j}\bigr\|_2^2.
\end{equation}


\section{PERFORMANCE ANALYSIS AND PROBLEM FORMULATION}
In the optimization stage, the AP jointly determines the beamforming matrix, model inference depth, and computation frequency according to the communication QoS requirements and sensing performance demand, enabling adaptive coordination among sensing, communication, and computation. In particular, to support stable sensing and continuous pose tracking in future time slots, the pose prediction error, measured by MPJPE, is selected as the optimization objective in the resource allocation process. 
\subsection{Energy Consumption and Latency}
The energy of the considered ISCC system consists of three parts: sensing transmission energy $E_s$, communication transmission energy $E_c$, and local computation energy $E_{\mathrm{comp}}$. 

For the local computation, the computational energy cost consists of two parts: point-cloud generation energy $E_\mathrm{pc}$ and model inference energy $E_\mathrm{in}$, i.e., \(E_{\mathrm{comp}} = E_\mathrm{pc} + E_\mathrm{in}\). For point-cloud generation, the main computation cost is caused by the FFT operation~\cite{chen2025sensing}. Specifically, in each sensing frame, the estimation requires an $N_s$-point FFT for each of the $N_c$ sensing snapshots. Therefore, the total energy cost of point-cloud generation for each sensing frame can be expressed as
\begin{equation}
    E_{\mathrm{pc}} = \gamma N_c N_s C_{pc}f_{\mathrm{pc}}^2{\log_2N_s},
\end{equation}
where $C_{pc}$ represents the computational intensity of FFT operation (in CPU cycles/element) and $f_{\mathrm{pc}}$ (in CPU cycles/s, i.e., Hz) denotes the computation resource of the server for point-cloud generation. The latency of point-cloud generation can be expressed as
\begin{equation}
    \tau_{\mathrm{pc}} = \frac{N_c N_s C_{pc}{\log_2N_s}}{f_{\mathrm{pc}}}.
\end{equation}
Let $C_{\rm glo}$, $C_{\rm grp}$, and $C_{\rm pred}$ denote the average CPU cycles required for global point-cloud processing, group feature construction, and final prediction head, respectively. Furthermore, let $C_e=C_{\rm glo}+C_{\rm grp}+C_{\rm pred}$ denote the total CPU cycles for the depth-independent encoding and decoding stages, while $C_L$ denotes the CPU cycles required for each depth layer. The energy cost of the inference can be expressed as~\cite{ding2026energy}:
\begin{equation}
    E_{\mathrm{in}} = \gamma M(C_e + CC_L)f_{\mathrm{in}}^2.
\end{equation}
The corresponding inference latency is given by:
\vspace{-0.3em}
\begin{equation}
    \tau_{\mathrm{in}} = \frac{M(C_e + CC_L)}{f_{\mathrm{in}}}.
\end{equation}
\vspace{-0.3em}
With the duration of one timeslot $T$, and the duration of sensing in one timeslot $T_s$, the sensing and communication transmission energy can be expressed as:
\begin{equation}
E_{\mathrm{s}} = T_s \|\mathbf{w}_r\|_2^2,\\ \quad
E_{\mathrm{c}} = T \sum_{u=1}^{U} \|\mathbf{w}_{c,u}\|_2^2.
\end{equation}

\subsection{Communication and sensing QoS}
 {In this paper, the communication users are assumed to be quasi-static within each time slot, and their locations are assumed to be accurately updated between consecutive slots using standard tracking methods, such as EKF. For the sensing target, the AP performs a rapid scan at task initialization to obtain an initial location estimate from the received echoes, which is used to initialize subsequent sensing beamforming and continuous tracking.} Based on that, the communication rate $R_u$ of user $u$ should satisfy $R_u \;\geq\; R_{\min}$.

To obtain accurate and complete human joint positions, the sensing QoS is constrained from two aspects: spatial coverage and echo detectability. First, a human-joint coverage constraint is introduced to ensure that all the human target joints can be effectively covered by the sensing beam. Then, an echo detectability threshold constraint is imposed to guarantee that the echo reflected from at least the human body center can be reliably detected. {Meanwhile, the sensing power across the visible scatterers should be sufficiently balanced to avoid excessive beam concentration on only a small portion of the human body}. Based on that, the beam coverage constraint of the sensing side is imposed as~\cite{wang20263d}:
\begin{equation}
    \eta \min_{1\leq j \leq J} 
    \big(\mathbf{a}_j^H \mathbf{R}_x^r \mathbf{a}_j\big)
    - \max_{1\leq j \leq J}
    \big(\mathbf{a}_j^H \mathbf{R}_x^r \mathbf{a}_j\big) \;\geq\; 0 ,
\label{coverage constraint}
\end{equation}
where $\eta$ denotes the coverage factor, $\mathbf{a}_j$ is the direction vector of $j$-th joint with $j\in\mathcal{J}$, and {$\mathbf{R}_x^r = \mathbf w_r \mathbf w_r^H$} is the sensing signal covariance matrix. {The minimum and maximum terms in (22) correspond to the weakest and strongest illumination powers among all human joints, respectively}. Let $\mathrm{SNR}_r$ denote the minimum detection threshold, the SNR of the human center should satisfy $\mathrm{SNR}_{s,o}\ge \mathrm{SNR}_r$.
\subsection{Empirical Model of MPJPE}
The MPJPE generally decreases as the model inference depth increases, since a deeper model usually provides a stronger feature-extraction capability.  Meanwhile, the achievable prediction accuracy is also limited by the quality of the sensing observations. A lower sensing SNR usually leads to less reliable input features, such as point-cloud perturbations, which results in a larger prediction error. Hence, the MPJPE can be modeled as a monotonically decreasing function of both the sensing SNR and the model inference depth, denoted by $\text{MPJPE} = m(\mathrm{SNR}_{s,o}, C)$. It is difficult to derive a closed-form analytical expression for this relationship because their impacts are generally non-linear with possible saturation effects. Therefore, an empirical fitting model is established based on experimental data. Specifically, a variant of the sigmoid function is adopted to capture the nonlinear and saturating effects of sensing SNR and model inference depth on MPJPE. By averaging the results over multiple experiments, the relationship of MPJPE with respect to sensing SNR and model inference depth can be given by:
\begin{align}
\label{M}
m(& \mathrm{SNR}_{s,o}, C)
= a
+ \frac{b}{1+e^{\lambda_1\left(\mathrm{SNR}_{s,o}-s_0\right)}}\nonumber \\
&+ \frac{d}{1+e^{\lambda_2\left(C-c_0\right)}}
+ \frac{g}{1+e^{\left(\lambda_3\,\mathrm{SNR}_{s,o}+\lambda_4C-\mu\right)}},
\end{align}
where all the model parameters, $a$, $b$, $d$, $g$, $\lambda_1$, $\lambda_2$, $\lambda_3$, $\lambda_4$ and $\mu$ are estimated via non-linear least squares fitting. {The fitted parameters are
$a=4.01$,
$(b,\lambda_1,s_0)=(4.28,0.19,-10.00)$,
$(d,\lambda_2,c_0)=(10.00,0.98,-0.25)$,
and
$(g,\lambda_3,\lambda_4,\mu)=(10.00,0.14,1.02,-1.82)$,
corresponding to the constant, SNR, inference depth,
and SNR-depth coupling terms, respectively. As illustrated in Fig.~\ref{fig:fitted_curves}, this model closely captures the
actual relationship among sensing SNR, model inference depth and MPJPE. The resulting fitting accuracy is evaluated using \(R^2\), root mean square error (RMSE), mean absolute error (MAE), and maximum error (MAXE), which are 0.9995, 0.02~cm, 0.02~cm, and 0.05~cm, respectively.}

\begin{figure}[t]
    \centering
    \begin{subfigure}[t]{0.54\linewidth}
        \centering
        \includegraphics[
            width=\linewidth,
            height=1\linewidth,
            trim=8 4 4 4,
            clip
        ]{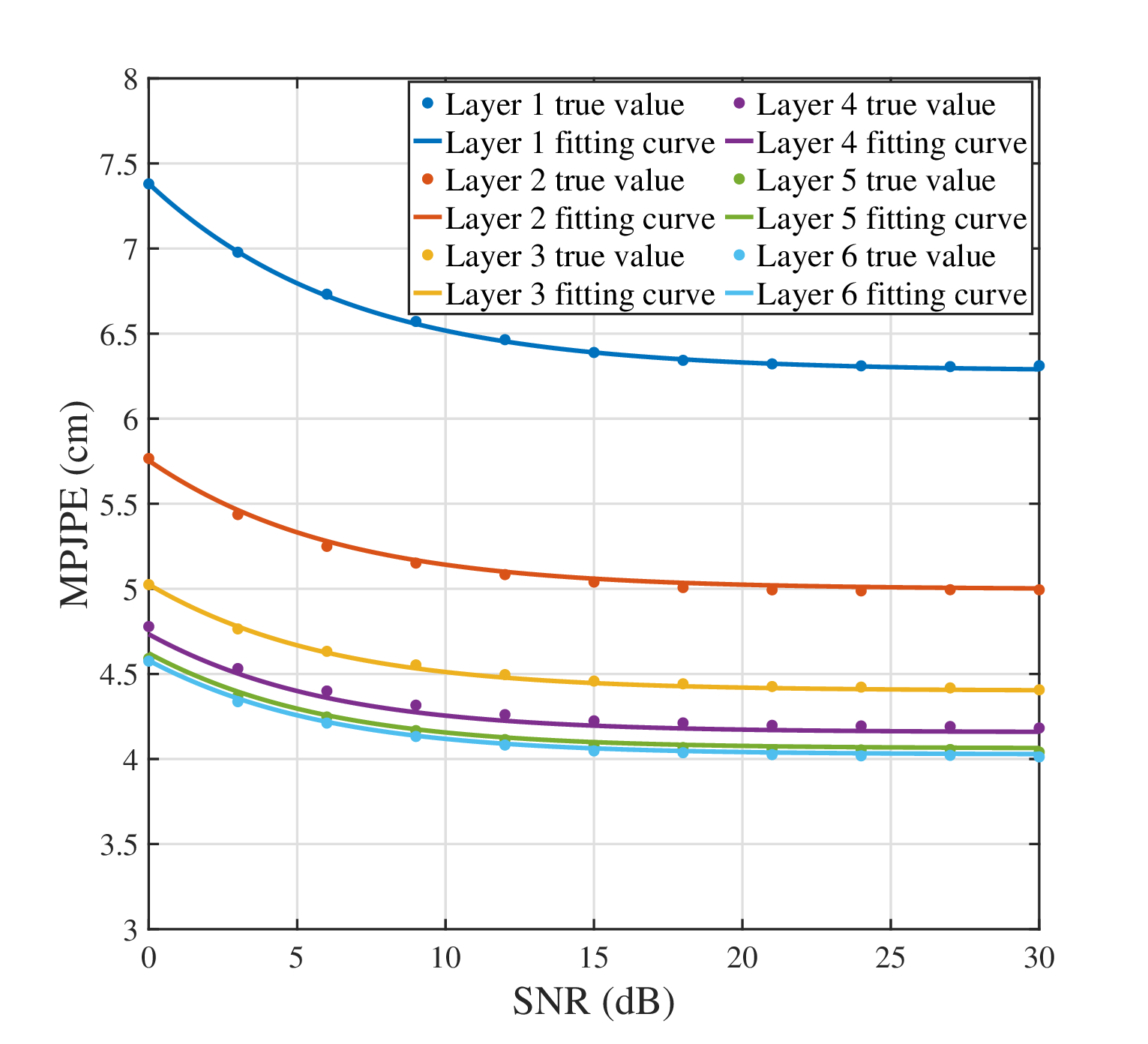}
        \caption{2D fitted curves.}
        \label{fig:curve2D}
    \end{subfigure}
    \hspace{-0.075\linewidth}
    \begin{subfigure}[t]{0.49\linewidth}
        \centering
        \includegraphics[
            width=\linewidth,
            height=1.05\linewidth,
            trim=10 4 18 4,
            clip
        ]{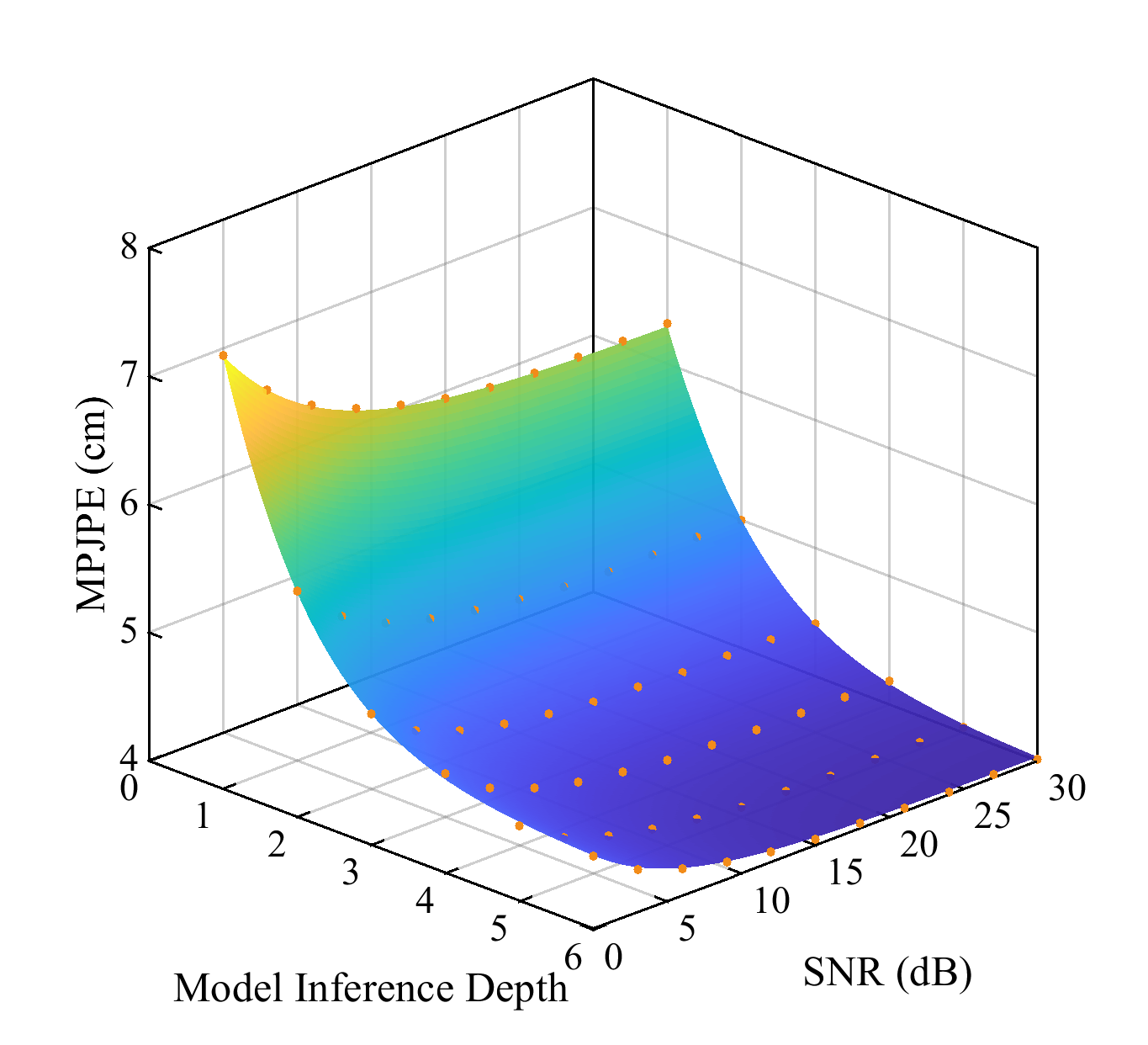}
        \caption{3D fitted surface.}
        \label{fig:curve3D}
    \end{subfigure}
    \caption{2D fitted curves and 3D fitted surface of MPJPE under different SNRs and model inference depths.}
    \label{fig:fitted_curves}
    \vspace{-1.5em}
\end{figure}

\subsection{Problem Formulation}
In this paper, we aim to minimize the pose prediction error of the ISCC system. To this end, the fitted function is used to represent the final performance of the considered system. Therefore, the pose prediction error minimization problem can be formulated as
\begin{subequations}
\label{eq:optimization_problem}
\begin{align}
    \min_{\mathbf{w}_{c,u}, \mathbf{w}_r, C, f_{\mathrm{pc}}, f_{\mathrm{in}}} & \quad m \left( \mathrm{SNR}_{s,o}, C \right), \tag{\theequation} \\
    \text{s.t.} \quad & \tau_{\mathrm{pc}} + \tau_{\mathrm{in}} \leq T,\label{tau} \\
    & E_{s} +E_{c}+ E_{\mathrm{pc}} + E_{\mathrm{in}} \leq {P}_{\max} T, \label{energy}\\
    & R_u \geq R_{\min}, \quad \forall u \in \mathcal{U}, \label{com QoS}\\
    & \mathrm{SNR}_{s,o} \geq \mathrm{SNR}_r, \label{sen QoS}\\
    & 1 \leq C \leq C_{\max}, \quad C \in \mathcal{C},\label{C}\\ 
    & 0 \leq  f_{\mathrm{pc}} \leq f_{\max}, \quad 0 \leq  f_{\mathrm{in}} \leq f_{\max},\label{f}\\
    & \tau_{\mathrm{pc}} \geq T_s, \label{h}\\
    & \eqref{coverage constraint},\label{cover}
\end{align}
\end{subequations}
where constraint \eqref{tau} ensures that point-cloud generation and model inference can be completed within one time slot. Constraint \eqref{energy} imposes the energy budget of the ISCC system. Constraint \eqref{com QoS} guarantees the communication QoS of each user. Constraint \eqref{sen QoS} ensures the minimum sensing SNR required for reliable echo detection at the human body center. Constraints \eqref{C} and \eqref{f} specify the feasible ranges of the model inference depth and computation frequency, respectively. Constraint {\eqref{h} ensures that the complete point-cloud can only be obtained after all sensing echoes within the time slot have been collected.} 
Constraint (24h) imposes the sensing beam coverage requirement over the human joints.

\section{JOINT RESOURCE ALLOCATION ALGORITHM}
Problem \eqref{eq:optimization_problem} is difficult to solve directly due to the integer variable and the non-convex objective function and constraints. To solve the optimization of beamforming matrix, model inference depth and computing frequency, we propose an alternating optimization scheme. Initially, we characterize the relationship among the sensing SNR, model inference depth and prediction error. Subsequently, the optimization problem is decomposed into two iterative subproblems. One optimizes the computation resources with fixed beamforming matrices, while the other optimizes the beamforming matrices with fixed computation resources. By alternately optimizing the two subproblems, the proposed AO-based scheme converges to a stationary solution 
\((C^*, \mathbf{w}_r^*, \{\mathbf{w}_{c,u}^*\}_{u=1}^{U}, f_{\mathrm{pc}}^*, f_{\mathrm{in}}^*)\), 
which minimizes the pose prediction error.

\subsection{AO-Based Algorithm}
The proposed iterative algorithm consists of two subproblems in each iteration. 
The first subproblem optimizes the model inference depth \(C\) and the CPU frequencies \(f_{\mathrm{pc}}\) and \(f_{\mathrm{in}}\) with fixed beamforming variables (\(\mathbf{w}_r\), \(\{\mathbf{w}_{c,u}\}_{u=1}^{U}\)), while the second subproblem optimizes \(\mathbf{w}_r\) and \(\{\mathbf{w}_{c,u}\}_{u=1}^{U}\) based on the obtained ($C$, $f_{\mathrm{pc}}$, $f_{\mathrm{in}}$) in the previous step.

\subsubsection{Computation Resource Allocation With Fixed Beamforming Matrix}
In this subproblem, since the sensing error function m(C) is monotonically decreasing with respect to the depth $C$, the optimal solution to this subproblem is the largest feasible $C$. To find the optimal $C$, we relax the discrete constraint \eqref{C} as $C \in [1, C_{\max}]$.

Accordingly, with fixed $\mathbf w_{c,u}$ and $\mathbf w_r$, the relaxed computation block is formulated as:
\begin{subequations}
\label{eq:optimization_sub1problem}
\begin{align}
    \max_{C,\mathbf f} \quad &  C \tag{\theequation} \\
    \text{s.t.} \quad & \frac{\kappa_{\mathrm{pc}}}{f_{\mathrm{pc}}} + \frac{\kappa_{\mathrm{in}}(C)}{f_{\mathrm{in}}} \leq T, \label{tausub1} \\
    &\gamma \kappa_{\mathrm{pc}}f_{\mathrm{pc}}^2 + \gamma \kappa_{\mathrm{in}}(C)f_{\mathrm{in}}^{2} \leq \hat E, \label{energysub1} \\
    & 1 \leq C \leq C_{\max},\quad C\in\mathbb{R},\label{Csub1}\\ 
    & \eqref{f},\quad \eqref{h} \nonumber
\end{align}
\end{subequations}
where $\kappa_{\mathrm{pc}}$ = $N_c N_s C_{pc}{\log_2N_s}$, $\kappa_{\mathrm{in}}(C)$ = $ M(C_e + CC_L)$ and $\hat{E}$ denotes $P_{\max}T-E_{s}-E_{c}$.

\begin{lemma}
    For the computation subproblem with a fixed beamforming matrix, the latency constraint is always active at the optimal solution, which satisfies:
\begin{equation}
\label{fc1}
    \frac{\kappa_{\mathrm{pc}}}{f_{\mathrm{pc}}^*} + \frac{\kappa_{\mathrm{in}}(C^*)}{f_{\mathrm{in}}^*} = T.
\end{equation}

\end{lemma}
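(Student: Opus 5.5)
Suppose, for contradiction, that an optimal solution $(C^*,f_{\mathrm{pc}}^*,f_{\mathrm{in}}^*)$ of \eqref{eq:optimization_sub1problem} has a strictly slack latency constraint, i.e. $\kappa_{\mathrm{pc}}/f_{\mathrm{pc}}^*+\kappa_{\mathrm{in}}(C^*)/f_{\mathrm{in}}^*<T$. We will build a feasible point whose energy is strictly smaller and which also allows a larger $C$. This contradicts optimality, except in the boundary case $C^*=C_{\max}$.

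\textbf{Step 1: lower $f_{\mathrm{in}}$.} Keep $C^*$ and $f_{\mathrm{pc}}^*$ fixed. Define $\tilde f_{\mathrm{in}}=\kappa_{\mathrm{in}}(C^*)/(T-\kappa_{\mathrm{pc}}/f_{\mathrm{pc}}^*)$. Slackness gives $\tilde f_{\mathrm{in}}<f_{\mathrm{in}}^*$, and with this choice the latency constraint holds with equality.

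\textbf{Step 2: check the other constraints.}
\begin{itemize}
\item Energy \eqref{energysub1} is strictly increasing in $f_{\mathrm{in}}$, so it now holds strictly: $\gamma\kappa_{\mathrm{pc}}f_{\mathrm{pc}}^{*2}+\gamma\kappa_{\mathrm{in}}(C^*)\tilde f_{\mathrm{in}}^2<\hat E$.
\item The box constraint \eqref{f} still holds, because $0<\tilde f_{\mathrm{in}}<f_{\mathrm{in}}^*\le f_{\max}$.
\item Constraint \eqref{h} is unaffected, since $f_{\mathrm{pc}}$ is unchanged.
\end{itemize}
(Alternatively, one could lower $f_{\mathrm{pc}}$ instead. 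That route has to respect $\tau_{\mathrm{pc}}\ge T_s$, so lowering $f_{\mathrm{pc}}$ only helps that constraint. Adjusting $f_{\mathrm{in}}$ is the clean choice.)

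\textbf{Step 3: exploit the slack to raise $C$.} The energy constraint is now strictly slack. Both $\kappa_{\mathrm{in}}(C)=M(C_e+CC_L)$ and the energy expression are continuous in $C$. Consider $C=C^*+\epsilon$ with $f_{\mathrm{in}}(\epsilon)=\kappa_{\mathrm{in}}(C^*+\epsilon)/(T-\kappa_{\mathrm{pc}}/f_{\mathrm{pc}}^*)$. This keeps latency tight, and $f_{\mathrm{in}}(\epsilon)\to\tilde f_{\mathrm{in}}<f_{\max}$ as $\epsilon\to0$. By continuity, for small enough $\epsilon>0$ both the energy constraint and the box constraint on $f_{\mathrm{in}}$ still hold. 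So whenever $C^*<C_{\max}$, the point $C^*+\epsilon$ is feasible and strictly better, which contradicts optimality. Hence at every optimum with $C^*<C_{\max}$, latency must be active.

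\textbf{Main obstacle: the case $C^*=C_{\max}$.} Here $C$ cannot be increased, so optimality in $C$ alone does not force equality. The objective is then flat in the frequencies, and optimal solutions need not be unique. I would resolve this by showing that an optimal solution exists that satisfies \eqref{fc1}: the Step 1 construction turns any optimal point into another optimal point (same $C^*$) with tight latency and lower energy. The lemma's claim should therefore be read as ``without loss of optimality'', and this is also the energy-minimal choice. I would state this explicitly, noting it is the natural tie-breaking rule since any leftover energy is returned to $\hat E$ for the beamforming subproblem.

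A KKT argument is an alternative route: with the multiplier of the latency constraint equal to zero, stationarity in $f_{\mathrm{in}}$ would force the energy multiplier to be zero. Stationarity in $C$ would then fail for $C<C_{\max}$. This reaches the same conclusion, but the direct perturbation argument avoids having to verify constraint qualifications for this nonconvex problem.
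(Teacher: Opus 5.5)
Your proof is correct and follows essentially the same route as the paper's: slack latency lets you lower a frequency, which strictly lowers energy and leaves room to raise $C$, giving a contradiction. Your version is more rigorous, because choosing $f_{\mathrm{in}}(\epsilon)=\kappa_{\mathrm{in}}(C^*+\epsilon)/(T-\kappa_{\mathrm{pc}}/f_{\mathrm{pc}}^*)$ keeps latency tight while $C$ increases, and your caveat for $C^*=C_{\max}$ is well taken: there the lemma holds only in the ``there exists an optimal (energy-minimal) solution'' sense, a case the paper's proof skips.
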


\begin{proof}
Assume that at an optimal solution $(C^*, f_{\mathrm{pc}}^*, f_{\mathrm{in}}^*)$, the latency constraint is inactive such that $\kappa_{\mathrm{pc}}/f_{\mathrm{pc}}^* + \kappa_{\mathrm{in}}(C^*)/f_{\mathrm{in}}^* < T$. Since the left-hand side is continuous and increases as frequencies decrease, there exists a sufficiently small perturbation that reduces $f_{\mathrm{pc}}^*$ or $f_{\mathrm{in}}^*$ while maintaining feasibility. However, the computational energy $\gamma \kappa_{\mathrm{pc}} f_{\mathrm{pc}}^2 + \gamma \kappa_{\mathrm{in}}(C) f_{\mathrm{in}}^2$ is strictly monotonically increasing with $f_{\mathrm{pc}}$ and $f_{\mathrm{in}}$. Thus, reducing these frequencies would strictly decrease energy consumption, releasing more budget to further increase the depth $C^*$. This contradicts the optimality of $(C^*, f_{\mathrm{pc}}^*, f_{\mathrm{in}}^*)$. Consequently, the latency constraint must be active at the optimal point, i.e., $\kappa_{\mathrm{pc}}/f_{\mathrm{pc}}^* + \kappa_{\mathrm{in}}(C^*)/f_{\mathrm{in}}^* = T$.
\end{proof}
Based on \textbf{Lemma 1}, the latency constraint is active at the optimal point. Therefore, the model inference depth \(C\) can be expressed as a function of the computation frequencies \(\mathbf f=\{f_{\mathrm{pc}},f_{\mathrm{in}}\}\), which is given by:
\begin{equation}
\label{couple}
    C(f_{\mathrm{pc}}, f_{\mathrm{in}}) = \frac{f_{\mathrm{in}}(T f_{\mathrm{pc}} - \kappa_{\mathrm{pc}}) - M C_e f_{\mathrm{pc}}}{M C_L f_{\mathrm{pc}}}.
\end{equation}
By substituting \(C(f_{\mathrm{pc}},f_{\mathrm{in}})\) into \eqref{eq:optimization_sub1problem}, the computation resource subproblem can be rewritten as:
\begin{subequations} \label{eq:reformulated_problem1}
\begin{align}
    \max_{f_{\mathrm{pc}}, f_{\mathrm{in}}} \quad &  \frac{f_{\mathrm{in}} (T f_{\mathrm{pc}} - \kappa_{\mathrm{pc}}) - M C_e f_{\mathrm{pc}}}{M C_L f_{\mathrm{pc}}}  \tag{\theequation} \label{fcin} \\
    \text{s.t.} \quad & \gamma \kappa_{\mathrm{pc}} f_{\mathrm{pc}}^2 + \gamma f_{\mathrm{in}}^3  \left( T - \frac{\kappa_{\mathrm{pc}}}{f_{\mathrm{pc}}} \right)  \le \hat{E}, \label{fcin2}\\
    & 1 \le \frac{f_{\mathrm{in}} (T f_{\mathrm{pc}} - \kappa_{\mathrm{pc}}) - M C_e f_{\mathrm{pc}}}{M C_L f_{\mathrm{pc}}} \le C_{\max}, \label{fcin3}\\
    & \frac{\kappa_{\mathrm{pc}}}{f_{\mathrm{pc}}} \ge T_s, \label{fcin4}\\
    & \eqref{f} \nonumber.
\end{align}
\end{subequations}
To solve the non-convex problem \eqref{eq:reformulated_problem1}, an AO-based method is proposed that iteratively optimizes $f_{\mathrm{pc}}$ and $f_{\mathrm{in}}$.

For the optimization of \(f_{\mathrm{pc}}\) with fixed \(f_{\mathrm{in}}\), since \(m(C)\) is monotonically decreasing with respect to \(C\) and \(C(f_{\mathrm{pc}},f_{\mathrm{in}})\) is monotonically increasing with respect to $f_{\mathrm{pc}}$, minimizing \(m(C)\) is equivalent to maximizing \(f_{\mathrm{pc}}\). Therefore, for fixed \(f_{\mathrm{in}}\), the optimization of \(f_{\mathrm{pc}}\) in \eqref{eq:reformulated_problem1} can be denoted as:
\begin{subequations}
\label{eq:reformulated_problem2}
\begin{align}
    \max_{f_{\mathrm{pc}}} \quad & f_{\mathrm{pc}} \tag{\theequation} \\
    \text{s.t.} \quad & \eqref{fcin2}, \eqref{fcin3}, \eqref{fcin4}, \eqref{f}. \nonumber
\end{align}
\end{subequations}
\begin{proposition}
The optimal point-cloud generation frequency $f_{\mathrm{pc}}^*$ for \eqref{eq:reformulated_problem2} is achieved at the boundary of the feasible region defined by the constraints. Specifically, the optimal solution is given by:
\begin{equation}
    f_{\mathrm{pc}}^* = \min \{ f_{\mathrm{pc}}^{E}, f_{\mathrm{pc}}^{C}, \frac{\kappa_{\mathrm{pc}}}{T_s}, f_{\max} \}
\end{equation}
where $f_{\mathrm{pc}}^{E}$, $f_{\mathrm{pc}}^{C}$, and $\frac{\kappa_{\mathrm{pc}}}{T_s}$ denote the upper bounds on $f_{\mathrm{pc}}$ induced by constraints \eqref{fcin2}, \eqref{fcin3} and \eqref{fcin4}, respectively.
\end{proposition}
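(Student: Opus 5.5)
The plan is to show that, with $f_{\mathrm{in}}$ fixed, each constraint of \eqref{eq:reformulated_problem2} that involves $f_{\mathrm{pc}}$ reduces to an upper bound of the form $f_{\mathrm{pc}}\le\bar f$. The feasible set is then an interval $[\underline f,\bar f^{*}]$, and maximizing the linear objective $f_{\mathrm{pc}}$ over that interval gives the right endpoint. That endpoint is the smallest of the individual upper bounds.

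\textbf{Step 1: the direct bounds.} Constraints \eqref{fcin4} and \eqref{f} are immediate. Constraint \eqref{fcin4} is $\kappa_{\mathrm{pc}}/f_{\mathrm{pc}}\ge T_s$, which is equivalent to $f_{\mathrm{pc}}\le \kappa_{\mathrm{pc}}/T_s$. Constraint \eqref{f} gives $f_{\mathrm{pc}}\le f_{\max}$; its lower part $f_{\mathrm{pc}}\ge0$ is implied by the positivity needed elsewhere.

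\textbf{Step 2: the depth constraint \eqref{fcin3}.} Rewrite
\begin{equation*}
C(f_{\mathrm{pc}},f_{\mathrm{in}})=\frac{f_{\mathrm{in}}T-MC_e}{MC_L}-\frac{f_{\mathrm{in}}\kappa_{\mathrm{pc}}}{MC_L f_{\mathrm{pc}}}.
\end{equation*}
This expression is strictly increasing in $f_{\mathrm{pc}}>0$, consistent with the monotonicity used to pass from \eqref{eq:reformulated_problem1} to \eqref{eq:reformulated_problem2}.
\begin{itemize}
\item The condition $C\ge1$ gives a lower bound: $f_{\mathrm{pc}}\ge f_{\mathrm{in}}\kappa_{\mathrm{pc}}/(f_{\mathrm{in}}T-MC_e-MC_L)$, valid when the denominator is positive.
\item The condition $C\le C_{\max}$ gives the upper bound $f_{\mathrm{pc}}\le f_{\mathrm{pc}}^{C}\triangleq f_{\mathrm{in}}\kappa_{\mathrm{pc}}/(f_{\mathrm{in}}T-MC_e-MC_LC_{\max})$. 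If that denominator is nonpositive, the bound is $+\infty$.
\end{itemize}

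\textbf{Step 3: the energy constraint \eqref{fcin2}.} Define
\begin{equation*}
h(f_{\mathrm{pc}})=\gamma\kappa_{\mathrm{pc}}f_{\mathrm{pc}}^2+\gamma f_{\mathrm{in}}^3\bigl(T-\kappa_{\mathrm{pc}}/f_{\mathrm{pc}}\bigr).
\end{equation*}
Both terms are strictly increasing in $f_{\mathrm{pc}}>0$: the first is a positive quadratic, and the second is increasing because $-\kappa_{\mathrm{pc}}/f_{\mathrm{pc}}$ increases. Hence $h$ is continuous and strictly increasing. Its sublevel set $\{h\le\hat E\}$ is therefore an interval $(0,f_{\mathrm{pc}}^{E}]$, where $f_{\mathrm{pc}}^{E}$ is the unique root of $h(f)=\hat E$. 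After multiplying by $f$, this root is the unique positive root of the cubic $\gamma\kappa_{\mathrm{pc}}f^3+(\gamma f_{\mathrm{in}}^3T-\hat E)f-\gamma f_{\mathrm{in}}^3\kappa_{\mathrm{pc}}=0$. It exists and is unique because $h(0^+)=-\infty$ and $h(+\infty)=+\infty$, and it can be found by bisection or Cardano's formula.

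\textbf{Step 4: conclude.} The feasible set is the intersection of these intervals, $[\underline f,\min\{f_{\mathrm{pc}}^{E},f_{\mathrm{pc}}^{C},\kappa_{\mathrm{pc}}/T_s,f_{\max}\}]$. Whenever it is nonempty, the maximizer of $f_{\mathrm{pc}}$ is its right endpoint, which is the claimed formula and lies on the boundary of one of the active constraints.

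I expect Step 3 to be the main obstacle. It requires checking that the substituted energy expression is monotone, so that it yields a single threshold rather than a union of intervals. It also requires being careful about the sign of $T-\kappa_{\mathrm{pc}}/f_{\mathrm{pc}}$, which is positive on the feasible region by \eqref{tausub1}. Finally, the nonemptiness caveat must be stated: if $\underline f$ exceeds the minimum, the subproblem is infeasible for this $f_{\mathrm{in}}$.
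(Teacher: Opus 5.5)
Your proposal is correct and follows the same route as the paper: both arguments show that, for fixed $f_{\mathrm{in}}$, the left-hand sides of the energy and depth constraints are strictly increasing in $f_{\mathrm{pc}}$ (your derivative of $h$ is the same one the paper computes), so each constraint becomes an upper bound and the maximizer is the smallest of these bounds. Your version is more complete than the paper's: you give $f_{\mathrm{pc}}^{C}$ and the cubic defining $f_{\mathrm{pc}}^{E}$ explicitly, and you note the lower bound coming from $C\ge 1$ together with the resulting nonemptiness caveat, both of which the paper's proof leaves implicit.
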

\begin{proof}
The first order derivative of \eqref{fcin2} is $2 \gamma \kappa_{\mathrm{pc}} f_{\mathrm{pc}} + \frac{\gamma f_{\mathrm{in}}^3 \kappa_{\mathrm{pc}}}{f_{\mathrm{pc}}^2}$, which is positive and monotonically increasing with respect to \(f_{\mathrm{pc}}\). This imposes an upper bound $f_{\mathrm{pc}}^{E}$ on $f_{\mathrm{pc}}$. Similarly, the first order derivative of \eqref{fcin3} is $\frac{f_{\mathrm{in}} \kappa_{\mathrm{pc}}}{M C_L f_{\mathrm{pc}}^2}$, the upper bound is denoted as $f_{\mathrm{pc}}^{C}$. Therefore, $f_{\mathrm{pc}}^*$ is the minimum of all such individual upper bounds to ensure all constraints are simultaneously satisfied.
\end{proof}

Similarly, for the optimization of \(f_{\mathrm{in}}\) with fixed \(f_{\mathrm{pc}}\), $C(f_{\mathrm{pc}}, f_{\mathrm{in}})$ is monotonically increasing with $f_{\mathrm{in}}$. Therefore, the optimization subproblem of \(f_{\mathrm{in}}\) with fixed $f_{\mathrm{pc}}$ is given by:
\vspace{-1em}
\begin{subequations}
\label{eq:reformulated_problem3}
\begin{align}
    \max_{f_{\mathrm{in}}} \quad & f_{\mathrm{in}} \tag{\theequation} \\
    \text{s.t.} \quad & \eqref{fcin2}, \eqref{fcin3}, \eqref{f}.\nonumber
\end{align}
\end{subequations}

Similarly to \eqref{eq:reformulated_problem2}, the relevant constraints in \eqref{eq:reformulated_problem3} are monotonically increasing with respect to $f_{\mathrm{in}}$, the optimal solution $f_{\mathrm{in}}^*$ is the maximum feasible value that satisfies all constraints simultaneously. The closed-form solution is given by:
\begin{equation}
    f_{\mathrm{in}}^* = \min \left\{ f_{\mathrm{in}}^{E}, f_{\mathrm{in}}^{C}, f_{\max} \right\},
\end{equation}

where \(f_{\mathrm{in}}^{E}\) and \(f_{\mathrm{in}}^{C}\) denote the upper bounds on \(f_{\mathrm{in}}\) induced by the energy constraint \eqref{fcin2} and the model-depth constraint \eqref{fcin3}, respectively.

After the AO updates of \(f_{\mathrm{pc}}\) and \(f_{\mathrm{in}}\), a continuous inference depth $\tilde{C}$ is obtained according to \eqref{couple}. Then, $\tilde{C}$ is projected onto the feasible set $\mathcal{C}$ by selecting the largest feasible integer $C^*$ that satisfies all constraints. In the 
proposed algorithm, a feasible initialization is adopted such that at least the 
minimum inference depth \(C=1\) can be supported. Once the discrete depth \(C^*\) is determined, the CPU frequencies \(f_{\mathrm{pc}}\) and \(f_{\mathrm{in}}\) are re-optimized to satisfy \eqref{fc1} while minimizing the computation energy $E_{\mathrm{comp}}$. By minimizing $E_{\mathrm{comp}}$ for the fixed $C^*$, the remaining budget can then be reallocated to beamforming matrices in subsequent iterations of the alternating optimization.
Thus, the CPU frequency $f_{\mathrm{pc}}$ and $f_{\mathrm{in}}$ re-optimization problem with fixed $C^*$ can be expressed as:

\begin{subequations}
\begin{align}
\min_{f_{\mathrm{pc}},\,f_{\mathrm{in}}} \quad
& \gamma\kappa_{\mathrm{pc}}f_{\mathrm{pc}}^{2}
+\gamma\kappa_{\mathrm{in}}(C^*)f_{\mathrm{in}}^{2} \tag{\theequation} \\
\mathrm{s.t.}\quad
& \eqref{tausub1}, \eqref{f}.\nonumber
\end{align}
\end{subequations}

\begin{proposition}
The optimal $f_{\mathrm{pc}}^*$ and $f_{\mathrm{in}}^*$ with fixed $C^*$ can be expressed as:
\begin{equation}
\label{eq:optimal_freq}
f_{\mathrm{pc}}^{*}
=
\min\left\{
\frac{\kappa_{\mathrm{pc}}+\kappa_{\mathrm{in}}(C^*)}{T},
\frac{\kappa_{\mathrm{pc}}}{T_s}
\right\},
\quad
f_{\mathrm{in}}^{*}
=
\frac{\kappa_{\mathrm{in}}(C^*)}
{T-\frac{\kappa_{\mathrm{pc}}}{f_{\mathrm{pc}}^{*}}}.
\end{equation}
\end{proposition}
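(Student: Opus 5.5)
The plan is to treat the re-optimization with fixed $C^*$ as a convex program and solve it through its KKT conditions, including the point-cloud constraint \eqref{h}, i.e., $f_{\mathrm{pc}}\le \kappa_{\mathrm{pc}}/T_s$, which the stated formula clearly accounts for. For $f_{\mathrm{pc}},f_{\mathrm{in}}>0$, the objective $\gamma\kappa_{\mathrm{pc}}f_{\mathrm{pc}}^2+\gamma\kappa_{\mathrm{in}}(C^*)f_{\mathrm{in}}^2$ is strictly convex. The latency function $\kappa_{\mathrm{pc}}/f_{\mathrm{pc}}+\kappa_{\mathrm{in}}(C^*)/f_{\mathrm{in}}$ is convex, and the remaining bounds are linear. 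Hence any KKT point is the unique global optimum. I will assume, as the initialization guarantees, that the problem is feasible with $T>T_s$ and that the $f_{\max}$ bounds do not bind; $C^*$ was chosen so that it is supportable within $f_{\max}$.

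\textbf{Step 1: the latency constraint is tight.} The argument mirrors Lemma 1. If $\kappa_{\mathrm{pc}}/f_{\mathrm{pc}}+\kappa_{\mathrm{in}}(C^*)/f_{\mathrm{in}}<T$, one can slightly decrease $f_{\mathrm{in}}$ (or $f_{\mathrm{pc}}$). This keeps all constraints satisfied, since the upper bounds only become slacker, and strictly lowers the energy. So at the optimum \eqref{fc1} holds. This immediately gives $f_{\mathrm{in}}^*=\kappa_{\mathrm{in}}(C^*)/(T-\kappa_{\mathrm{pc}}/f_{\mathrm{pc}}^*)$, which is the second formula. It remains to find $f_{\mathrm{pc}}^*$.

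\textbf{Step 2: the case where \eqref{h} is inactive.} With multiplier $\nu$ on latency, stationarity gives $2\gamma\kappa_{\mathrm{pc}}f_{\mathrm{pc}}=\nu\kappa_{\mathrm{pc}}/f_{\mathrm{pc}}^2$ and $2\gamma\kappa_{\mathrm{in}}f_{\mathrm{in}}=\nu\kappa_{\mathrm{in}}/f_{\mathrm{in}}^2$. Both reduce to $f^3=\nu/(2\gamma)$, so $f_{\mathrm{pc}}=f_{\mathrm{in}}$. Substituting into the active latency equation gives the common value $(\kappa_{\mathrm{pc}}+\kappa_{\mathrm{in}}(C^*))/T$. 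This candidate is optimal whenever it satisfies $f_{\mathrm{pc}}\le\kappa_{\mathrm{pc}}/T_s$.

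\textbf{Step 3: the case where \eqref{h} binds.} This is the main point needing care: showing that when the Step 2 candidate violates \eqref{h}, the optimum is exactly $f_{\mathrm{pc}}^*=\kappa_{\mathrm{pc}}/T_s$. I would argue this by reducing to one variable. Eliminate $f_{\mathrm{in}}$ via the active latency equation and define $g(f_{\mathrm{pc}})=\gamma\kappa_{\mathrm{pc}}f_{\mathrm{pc}}^2+\gamma\kappa_{\mathrm{in}}^3/(T-\kappa_{\mathrm{pc}}/f_{\mathrm{pc}})^2$. This function is convex on $f_{\mathrm{pc}}>\kappa_{\mathrm{pc}}/T$ and its unconstrained minimizer is the Step 2 point. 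A convex function restricted to an interval lying left of its minimizer attains its minimum at the right endpoint, so $f_{\mathrm{pc}}^*=\kappa_{\mathrm{pc}}/T_s$. Equivalently, in KKT terms one can take a nonnegative multiplier $\mu$ on \eqref{h}. Combining both cases yields $f_{\mathrm{pc}}^*=\min\{(\kappa_{\mathrm{pc}}+\kappa_{\mathrm{in}}(C^*))/T,\ \kappa_{\mathrm{pc}}/T_s\}$, and $f_{\mathrm{in}}^*$ follows from Step 1. In the binding case this gives $f_{\mathrm{in}}^*=\kappa_{\mathrm{in}}(C^*)/(T-T_s)$, which is positive because $T>T_s$.
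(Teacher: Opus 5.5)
Your proposal is correct and follows essentially the paper's route: a KKT analysis with the latency constraint active, where the case with the $T_s$ constraint inactive gives $f_{\mathrm{pc}}=f_{\mathrm{in}}=(\kappa_{\mathrm{pc}}+\kappa_{\mathrm{in}}(C^*))/T$, and the binding case gives $f_{\mathrm{pc}}=\kappa_{\mathrm{pc}}/T_s$ with $f_{\mathrm{in}}$ taken from the tight latency equation. Your Step 1 and the one-variable convexity argument in Step 3 also justify two things the paper's ``combining the two cases'' leaves implicit: why the latency constraint is active, and exactly when the boundary case applies.
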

\begin{proof}
The Lagrangian of the above problem is given by:
$
\mathcal{L}
=
\gamma \kappa_{\mathrm{pc}} f_{\mathrm{pc}}^{2}
+\gamma \kappa_{\mathrm{in}} f_{\mathrm{in}}^{2}
+\lambda
\left(
\frac{\kappa_{\mathrm{pc}}}{f_{\mathrm{pc}}}
+\frac{\kappa_{\mathrm{in}}}{f_{\mathrm{in}}}
-T
\right)
+\nu
\left(
f_{\mathrm{pc}}
-\frac{\kappa_{\mathrm{pc}}}{T_s}
\right)$,
where \(\lambda\) is the multiplier associated with the active latency constraint and \(\nu \geq 0\) is the multiplier associated with the upper bound \(f_{\mathrm{pc}}\leq \kappa_{\mathrm{pc}}/T_s\).
The Karush-Kuhn-Tucker (KKT) stationarity conditions are
$2\gamma\kappa_{\mathrm{pc}} f_{\mathrm{pc}}
-\lambda\frac{\kappa_{\mathrm{pc}}}{f_{\mathrm{pc}}^{2}}
+\nu =0,\quad 2\gamma \kappa_{\mathrm{in}} f_{\mathrm{in}}
-\lambda\frac{\kappa_{\mathrm{in}}}{f_{\mathrm{in}}^{2}}
=0$. Thus, the optimal solution can be obtained from the following two cases:

1) When \(\nu=0\), the upper-bound constraint is inactive. From the stationarity conditions, we have \(f_{\mathrm{pc}}=f_{\mathrm{in}}\). Substituting this result into the latency equality constraint gives $f_{\mathrm{pc}}=f_{\mathrm{in}}=
\frac{\kappa_{\mathrm{pc}}+\kappa_{\mathrm{in}}(C^*)}{T}$.

2) When \(\nu>0\), the upper-bound constraint is active. Thus, $f_{\mathrm{pc}}=\frac{\kappa_{\mathrm{pc}}}{T_s},$
and the corresponding \(f_{\mathrm{in}}\) is obtained from the latency equality constraint as $f_{\mathrm{in}}
=
\frac{\kappa_{\mathrm{in}}(C^*)}
{T-\frac{\kappa_{\mathrm{pc}}}{f_{\mathrm{pc}}}}.
$

Combining the above two cases, the optimal CPU frequencies are given by \eqref{eq:optimal_freq}.
\end{proof}

\subsubsection{Beamforming Matrix Design With Fixed Computation Resources}
Let $\tilde E_{\mathrm{comp}}$ denote the computation energy obtained from ($f_{\mathrm{pc}}$, $f_{\mathrm{in}}$ and $C)$ in the previous step.
After solving the computation resource subproblem, we begin the subproblem of beamforming matrix design with fixed ($f_{\mathrm{pc}}$, $f_{\mathrm{in}}$ and $C)$. Let $\mathbf{R}_x^{\mathrm{c}}$ denote $\sum_{u=1}^{U}\mathbf{w}_{\mathrm{c},u}\mathbf{w}_{\mathrm{c},u}^{H}$, the subproblem about beamforming vector $\mathbf w_c$ and $\mathbf w_r$ can be expressed as:
\begin{subequations}
\begin{align}
\label{aggrewcwr}
\max_{\mathbf{w}_r,\mathbf{w}_c}\quad
& \mathrm{SNR}_{s,o} \tag{\theequation} \\
\mathrm{s.t.}\quad
& \mathrm{Tr}(\mathbf{R}_x^c)T+\mathrm{Tr}(\mathbf{R}_x^r) T_s+\tilde E_{\mathrm{comp}} \le P_{{max}}T, \\
& \eqref{com QoS}, \eqref{sen QoS}, \eqref{coverage constraint}.\nonumber 
\end{align}
\end{subequations}

To solve the beamforming design subproblem, semidefinite relaxation (SDR) is adopted by defining \( \mathbf{W}_{r} = \mathbf{w}_{r} \mathbf{w}_{r}^H \) with the constraints \( \mathbf{W}_{r} \succeq 0 \) and \( \mathbf{W}_{c,u} = \mathbf{w}_{c,u} \mathbf{w}_{c,u}^H \) with the constraints \( \mathbf{W}_{c,u} \succeq 0 \). 
Let $\mathbf A_o$ denote $ \mathbf A(\theta_o,\phi_o)\mathbf A^H(\theta_o,\phi_o)$, $\mathbf A_j$ denote $ \mathbf a_j\mathbf a^H_j$, $\mathbf{H}_u \triangleq \mathbf{h}_u\mathbf{h}_u^H$, and $\kappa_s $ denote $\triangleq \frac{|\zeta|^2}{4d_0^4\sigma_z^2}$, thus, $\mathrm{SNR}_{s,o}$ can be expressed as $\kappa_s\,\operatorname{Tr}(\mathbf W_r \mathbf A_o)$.

Accordingly, the objective function becomes 
\begin{align}
m(\mathbf W_r)
=
\bar a
+ \frac{b}{1+e^{\bar\lambda_1\mathrm{Tr}(\mathbf W_r\mathbf A_o)-\bar s_0}}
+ \frac{g}{1+e^{\bar\lambda_3\mathrm{Tr}(\mathbf W_r\mathbf A_o)-\bar\mu}} ,
\end{align}
where \(\bar a\), \(\bar s_0\), \(\bar \mu\), \(\bar\lambda_1\), and \(\bar\lambda_3\) are equivalent fitting parameters obtained after fixing \(C^*\) and substituting \(\mathrm{SNR}_{s,o}=\kappa_s\operatorname{Tr}(\mathbf W_r\mathbf A_o)\). 
Since the error function $m(\mathbf W_r)$ is monotonically decreasing with respect to the scalar quantity $\mathrm{Tr}(\mathbf W_r \mathbf A_o)$. Therefore, the optimal beamforming design is obtained by maximizing the feasible value of \(\mathrm{Tr}(\mathbf W_r\mathbf A_o)\).

\begin{algorithm}[t]
\caption{AO-Based Joint Resource Allocation Algorithm}
\label{alg:ao}
\begin{algorithmic}[1]
\State \textbf{Initialize} $C^{(0)}=1$, and obtain $f_{\rm pc}^{(0)}$ and $f_{\rm in}^{(0)}$ according to Proposition 2.
\State \textbf{Initialize} $\mathbf{w}_{r}^{(0)}$ as the minimum-power sensing beamforming vector satisfying \eqref{sen QoS}.
\State \textbf{Initialize} $\{\mathbf{w}_{c,u}^{(0)}\}_{u=1}^{U}$ as the minimum-power communication beamforming vectors satisfying \eqref{com QoS}.
\State \textbf{Input} $\mathbf{h}_u$, $\mathbf A_j$, $\{\mathbf{w}_{c,u}^{0}\}_{u=1}^{U}$, $\mathbf{w}_{r}^{0}$, $f_{\mathrm{pc}}^{0}$, $f_{\mathrm{in}}^{0}$, $C^{0}$.
\State set $I = 0$ and set the maximal error tolerance $\epsilon > 0$.
\State Define $\mathbf{W}_{c,u}=\mathbf{w}_{c,u}\mathbf{w}_{c,u}^{H}$ and $\mathbf{W}_{r}=\mathbf{w}_{r}\mathbf{w}_{r}^{H}$.
\While{not converged}
    \State Set $l=0$, $C^{(l)} =C^{(I)}$, $f_{\mathrm{pc}}^{(l)}=f_{\mathrm{pc}}^{(I)}$, $f_{\mathrm{in}}^{(l)}=f_{\mathrm{in}}^{(I)}$ 
    \State  Decouple $C^{(l)}$ from $f_{\rm pc}^{(l)}$ and $f_{\rm in}^{(l)}$ according to \eqref{couple}.
    \Repeat
        \State Update $f_{\rm pc}^{(l+1)}$ by Proposition 1 with given $f_{\rm in}^{(l)}$.
    \State Update $f_{\rm in}^{(l+1)}$ by (32) with given $f_{\rm pc}^{(l+1)}$.
    \State Set $l=l+1$.
    \Until{$\left|f_{\rm pc}^{(l+1)} - f_{\rm pc}^{(l)}\right| < \epsilon$ and $\left|f_{\rm in}^{(l+1)} - f_{\rm in}^{(l)}\right| < \epsilon$}.
    \State  Obtain the continuous depth $\tilde{C}$ according to \eqref{couple}.
    \State  Obtain \(C^{(I+1)}\) by projecting $\tilde{C}$ onto the feasible integer set.
    \State  Obtain $f_{\mathrm{pc}}^{(I+1)}$ and $f_{\mathrm{in}}^{(I+1)}$ according to Proposition 2 with fixed $C^{(I+1)}$.
\State Obtain $\{\mathbf{W}_{c,u}^{(I+1)}\}_{u=1}^{U}$ and $\mathbf{W}_{r}^{(I+1)}$ by solving \eqref{realwcwr}.
\State Set $I= I + 1$.
\EndWhile
    \State {Recover $\{\mathbf{w}_{c,u}^{\star}\}_{u=1}^{U}$ and
$\mathbf{w}_{r}^{\star}$ from
$\{\mathbf{W}_{c,u}^{\star}\}_{u=1}^{U}$ and
$\mathbf{W}_{r}^{\star}$ via Gaussian randomization.}
\State \textbf{Output} $\mathbf{w}^{\star}_{c,u}$, $\mathbf{w}^{\star}_{r}$, $C^{\star}$, $f^{\star}_{in}$, and $f^{\star}_{pc}$.
\end{algorithmic}
\end{algorithm}
The minimum communication QoS can be equivalently transformed into the following SINR threshold constraint:
\begin{align}
\mathrm{SINR}_u \ge \mathrm{SINR}_{\mathrm{min}},\quad
\mathrm{SINR}_{\mathrm{min}} = 2^{R_{\min}/B_c}-1.
\end{align}

Therefore, the communication QoS requirement reduces to a set of fixed SINR constraints. Based on that, \eqref{aggrewcwr} can be reformulated as the following semidefinite programming (SDP) problem:
\begin{subequations}
\begin{align}
\label{realwcwr}
&\max_{\{\mathbf{W}_{c,u}\},\,\mathbf{W}_r} \quad 
 \mathrm{Tr}( \mathbf{W}_r \mathbf{A}_o) \tag{\theequation} \\
\mathrm{s.t.}\quad 
& \sum_{u=1}^{U}\mathrm{Tr}(\mathbf{W}_{c,u})T +\mathrm{Tr}(\mathbf{W}_r)T_s\leq \tilde E_{\mathrm{t}}, \\
& \mathrm{Tr}(\mathbf{H}_u\mathbf{W}_{c,u})
\geq
\mathrm{SINR}_{\mathrm{min}}\left(\sum_{i\neq u}\mathrm{Tr}(\mathbf{H}_u\mathbf{W}_{c,i})+\sigma_n^2\right), \\
& \kappa_s\,\mathrm{Tr}(\mathbf{A}_o\mathbf{W}_r)\geq \mathrm{SNR}_r, \\
& \eta\mathrm{Tr}(\mathbf A_j\mathbf W_r)
\geq
\mathrm{Tr}(\mathbf A_i\mathbf W_r),
\quad  \forall i,j \in \mathcal{J}, \\
& \mathbf{W}_{c,u}\succeq 0,\quad \forall u \in \mathcal{U},\\
& \mathbf{W}_r\succeq 0,
\end{align}
\end{subequations}
where $\tilde E_{\mathrm{t}} = P_{\max}T -\tilde E_{\mathrm{comp}}$. \eqref{realwcwr} is convex and we can solve it efficiently using CVX~\cite{zhang2024joint}. {To
obtain rank-one beamformers, Gaussian randomization~\cite{wang20263d} is employed to recover feasible beamforming vectors
$\{\mathbf{w}_{c,u}\}_{u=1}^{U}$ and $\mathbf{w}_{r}$
from $\{\mathbf{W}_{c,u}\}_{u=1}^{U}$ and $\mathbf{W}_{r}$.}

With the objective of minimizing pose prediction error, the proposed AO-based algorithm can efficiently solve the coupled optimization of beamforming, model inference depth, and computation frequency.

\subsection{Algorithm Analysis}
The proposed AO-based algorithm alternately optimizes the computation resources and the beamforming matrices. In each iteration, the computation resource allocation step updates the CPU frequencies and model inference depth within the feasible region, while the beamforming design step solves an SDP problem to improve the sensing SNR. Since the fitted prediction error decreases with both model depth and sensing SNR, each update does not increase the objective value. Since MPJPE is non-negative and lower-bounded, the generated objective sequence is guaranteed to converge. In addition, the model depth is selected from a finite discrete set, so the algorithm reaches a stable solution after a finite number of iterations in practice.

\subsubsection{Complexity Analysis}
As shown in Algorithm~\ref{alg:ao}, for each AO iteration, the complexity of the computation subproblem can be treated as $\mathcal{O}(1)$. The complexity of the beamforming subproblem  can be expressed as $\mathcal{O}\big(I(U+1)^3N_t^6\big)$, where $I$ denotes the number of iterations. {For comparison, exhaustive search over the model inference depth first checks the feasibility of each candidate $C$ and solves the beamforming subproblem only for feasible candidates. Let $N_C \leq C_{\max}$ denote the number of feasible inference depths. Its computational complexity is therefore $\mathcal{O}\!\left(C_{\max}+N_C(U+1)^3N_t^6\right)$. Therefore, the proposed AO algorithm has a lower computational complexity than exhaustive search.}

\section{Results and Analysis}

\subsection{Experimental and Simulation Settings
}
\subsubsection{Experimental Settings}
The evaluation of ET-Mamba uses the \textit{MM-Fi} dataset~\cite{yang2023mm}, a publicly available multimodal dataset designed for indoor human sensing tasks. This dataset contains mmWave radar point clouds and pose annotations derived from a depth camera. The point-cloud data are segmented into $M=8$ frame sequences using a sliding-window strategy, where each sequence is used to predict the 3D human joint positions in the next frame. The generated sequence samples are then divided into training and test sets with an 8:2 ratio.

MPJPE and PA-MPJPE are adopted to quantitatively evaluate the pose prediction performance.
PA-MPJPE further aligns the predicted pose with the ground truth through Procrustes analysis, including optimal rotation, translation, and scaling, before computing the MPJPE. This metric reduces the influence of global misalignment and better reflects the structural accuracy of the predicted pose. Lower values of both MPJPE and PA-MPJPE indicate higher pose prediction accuracy.

The proposed model is implemented in PyTorch. The Adam optimizer is used with an initial learning rate of \(0.001\) and a batch size of \(128\). The model is trained for 100 epochs. For local feature grouping, the number of neighbors in KNN is set to 16, and the group size is set to 32. The input data dimension is $N_i \times 4$, and the feature dimension is set to 384. All experiments are conducted on an NVIDIA A100 GPU.

\begin{figure*}[!t]
    \centering

    \begin{subfigure}[t]{0.26\textwidth}
        \centering
        \includegraphics[width=\linewidth]{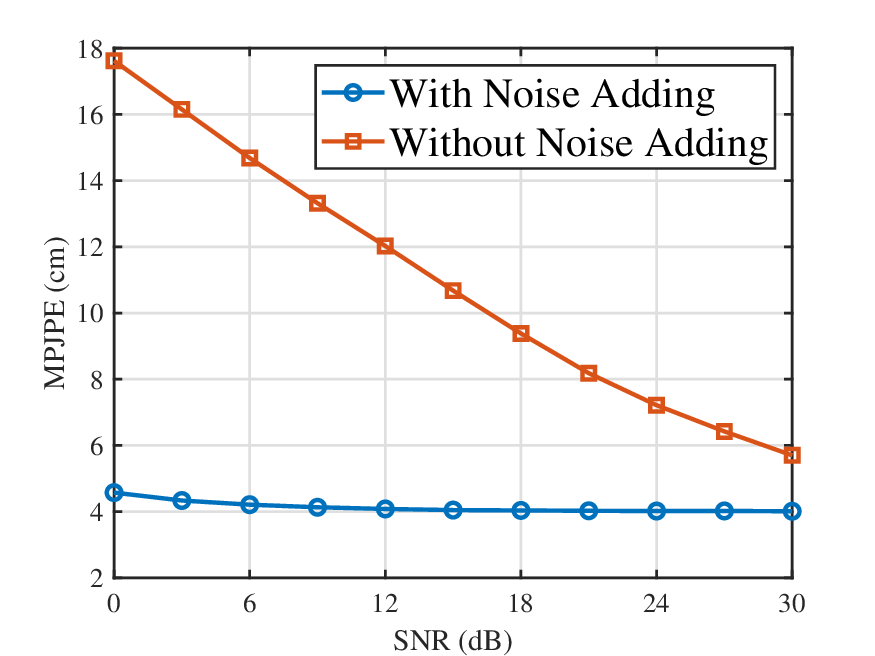}
        \caption{}
        \label{noise}
    \end{subfigure}
    \hfill
    \hspace{-0.070\linewidth}
    \begin{subfigure}[t]{0.74\textwidth}
        \centering
        \includegraphics[width=\linewidth]{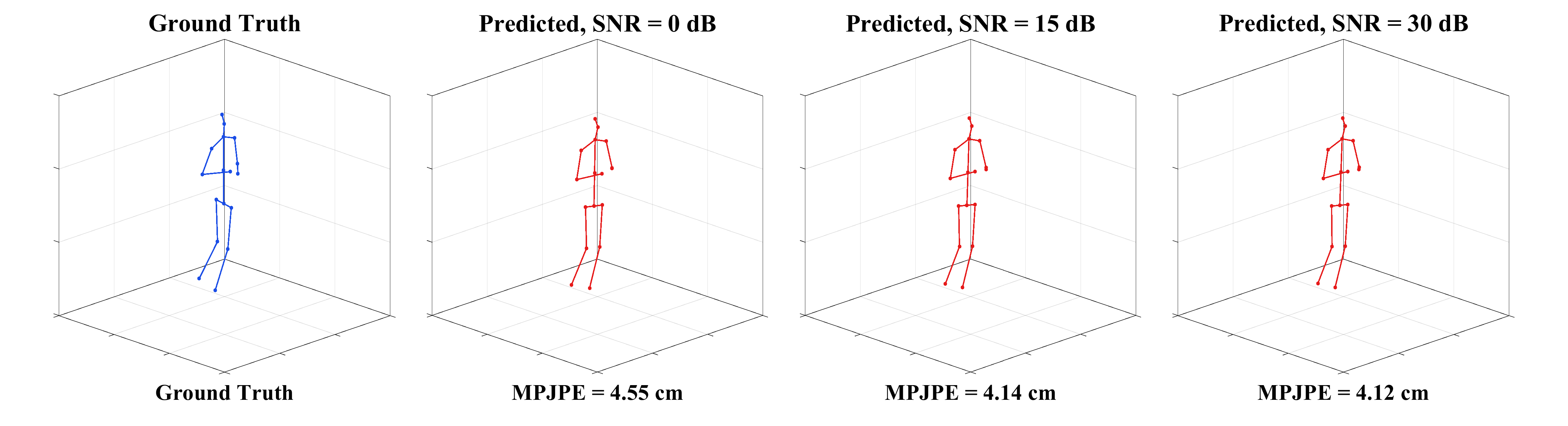}
        \caption{}
        \label{joints}
    \end{subfigure}

    \caption{Robustness evaluation and qualitative pose prediction results under different sensing SNR levels.}
\label{fig:two_subfiguresabout}
\vspace{-1em}
\end{figure*}

\subsubsection{Simulation Settings}
All system and computation parameters are summarized in Table~\ref{tab:simulation_parameters}. Unless otherwise specified, these parameters are used as the default settings in the following simulations.
\vspace{-1em}
\subsection{Performance of ET-Mamba}
For comparison, the baseline methods include \textbf{PointMamba}~\cite{liang2024pointmamba}, \textbf{PointFormer}~\cite{shi2020point}, and \textbf{STPM}~\cite{chen2025stpm}. These baselines are selected to provide a comprehensive comparison across different types of point-cloud learning architectures. Specifically, PointFormer represents attention-based methods, PointMamba represents recent Mamba-based architectures, and STPM highlights a BiMamba-based baseline for point-cloud sequences. All methods are trained and evaluated under the same experimental protocol to ensure a fair comparison. The relevant results are shown in Table~\ref{tab:performance_comparison}.
\begin{table}[t]
\centering
\caption{SIMULATION PARAMETERS}
\label{tab:simulation_parameters}
\renewcommand{\arraystretch}{1.1}
\begin{tabular}{ll}
\toprule
\textbf{Parameter} & \textbf{Value} \\
\midrule
Number of communication users, $U$ & 4 \\
UPA size, $(N_x,N_z)$ & $(4,4)$ \\
Communication noise power, $\sigma_n^2$ & $10^{-6}$ \\
Sensing noise power, $\sigma_z^2$ & $10^{-6}$ \\
Communication bandwidth, $B_c$ & $0.5~\mathrm{GHz}$ \\
Sensing bandwidth, $B_r$ & $0.5~\mathrm{GHz}$ \\
Minimum communication rate, $R_{\min}$ & $3\,\mathrm{Gbps}$ \\
Slot duration, $T$ & $0.1\,\mathrm{s}$ \\
Sensing duration, $T_s$ & $0.05\,\mathrm{s}$ \\
Minimum sensing SNR threshold, $\mathrm{SNR}_r$ & 0~dB \\
Coverage factor, $\eta$ & 3 \\
Maximum computation frequency, $f_{\max}$ & $10^9\,\mathrm{Hz}$ \\
Maximum Mamba depth, $C_{\max}$ & 6 \\
Effective switched-capacitance coefficient, $\gamma$ & $10^{-25}$ \\
Number of snapshots per frame, $N_c$ & 128 \\
FFT size, $N_s$ & 256 \\
\bottomrule
\end{tabular}
\end{table}
\begin{table}[!t]
\centering
\caption{Performance comparison of different methods for mmWave pose prediction}
\label{tab:performance_comparison}
\footnotesize
\renewcommand{\arraystretch}{1.1}
\resizebox{\columnwidth}{!}{
\begin{tabular}{lcccc}
\toprule
\textbf{Method} & \textbf{MPJPE (cm)} & \textbf{PA-MPJPE (cm)} & \textbf{Parameters (M)} & \textbf{FLOPs (G)} \\
\midrule
PointMamba & 7.79 & 4.23 & 15.8 & 2.5 \\
PointFormer & 5.54 & 3.17 & 14.1 & 16.6 \\
{STPM} & 3.85 & 2.42 & 15.9 & 3.4 \\
ET-Mamba (Ours) & \textbf{3.24} & \textbf{2.20} & \textbf{15.4} & \textbf{3.5} \\
\bottomrule
\end{tabular}
}
\end{table}
\begin{table}[!t]
\centering
\caption{Ablation study}
\label{tab:ablation}
\footnotesize
\renewcommand{\arraystretch}{1.1}
\resizebox{\columnwidth}{!}{
\begin{tabular}{lcc}
\toprule
\textbf{Method} & \textbf{MPJPE (cm)} & \textbf{PA-MPJPE (cm)} \\
\midrule
ET-Mamba (Full) & \textbf{3.24} & \textbf{2.20} \\
ET-Mamba w/o CRB noise & 3.27 & 2.21 \\
ET-Mamba w/o distance ordering (radar) & 3.58 & 2.34 \\
ET-Mamba w/o distance ordering (center) & 3.49 & 2.37 \\
ET-Mamba w/o GLF & 3.54 & 2.32 \\
\bottomrule
\end{tabular}
}
\end{table}
\vspace{0em}

\subsubsection{Quantitative Comparison}
The proposed model is compared with representative baselines in terms of MPJPE, PA-MPJPE, model parameters, and FLOPs. The results show that ET-Mamba achieves the best performance in terms of both MPJPE and PA-MPJPE, indicating its superior performance in mmWave-based pose prediction.  Compared with the strongest baseline STPM, ET-Mamba reduces MPJPE from \(3.85\)~cm to \(3.24\)~cm and PA-MPJPE from \(2.42\)~cm to \(2.20\)~cm, corresponding to relative improvements of \(15.84\%\) and \(9.09\%\), respectively. Compared with the attention-based PointFormer, ET-Mamba reduces the FLOPs by \(78.92\%\), with the MPJPE and PA-MPJPE reduced by \(41.52\%\) and \(30.60\%\), respectively. Overall, ET-Mamba achieves a favorable balance between prediction accuracy and computational complexity.

\subsubsection{Ablation Study}
As shown in Table~\ref{tab:ablation}, a comprehensive ablation study is conducted on the MM-Fi dataset to quantify the contribution of different components in the proposed ET-Mamba framework. {Compared with the variant without the multi-sort strategy,
removing the distance ordering with the radar as the reference point increases MPJPE to 3.58~cm and PA-MPJPE to 2.34~cm, while removing the distance ordering with the center as the reference point increases MPJPE to 3.49~cm and PA-MPJPE to 2.37~cm. This is because the two distance-ordering strategies provide different geometric views of the point-cloud. Specifically, the radar-based ordering describes the spatial distribution of local groups from the sensing viewpoint by measuring their distances to the radar origin, while the center-based ordering characterizes the internal geometric organization of the point-cloud by measuring their distances to the centroid of all group centers. Removing either ordering strategy reduces the geometric information preserved in the serialized token sequence, degrading the subsequent Mamba-based sequence modeling. In addition, compared with the variant without the GLF module, the full model reduces MPJPE from \(3.54\)~cm to \(3.24\)~cm and PA-MPJPE from \(2.32\)~cm to \(2.20\)~cm. This is because, without the GLF module, each local group token is mainly represented by its neighborhood-level features and cannot effectively incorporate the global contextual information of the entire point cloud, thereby degrading pose prediction performance.}
\begin{figure*}[!t]
    \centering
    \begin{subfigure}[t]{0.33\textwidth}
        \centering
        \includegraphics[width=\linewidth]{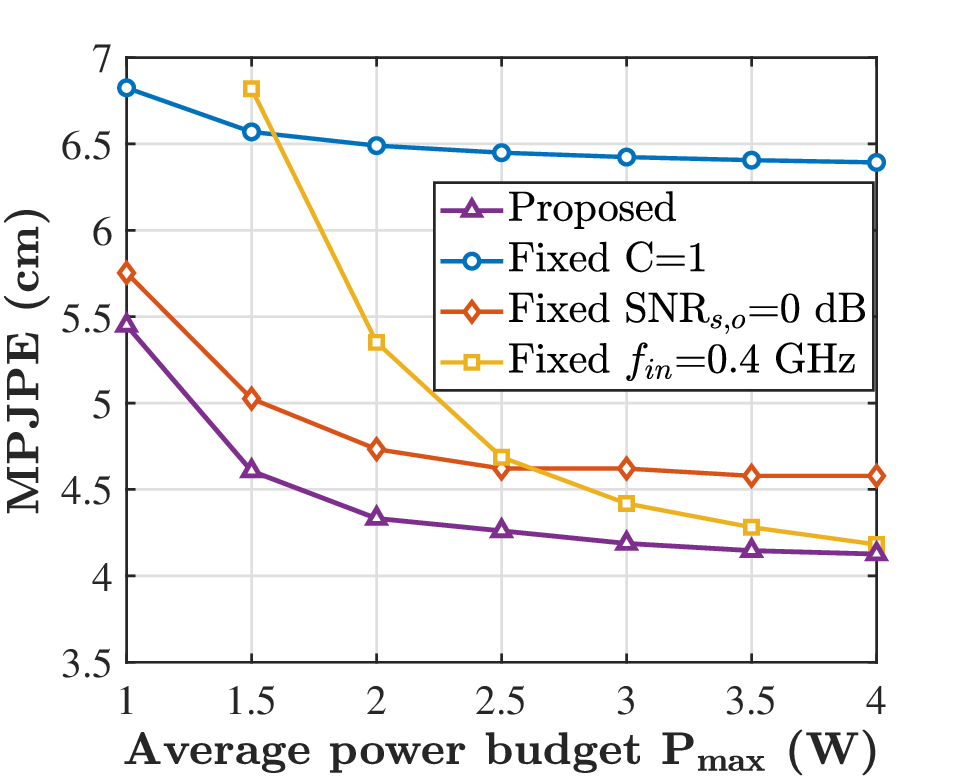}
        \caption{}
        \label{total energy}
    \end{subfigure}
    \hfill
    \begin{subfigure}[t]{0.32\textwidth}
        \centering
        \includegraphics[width=\linewidth]{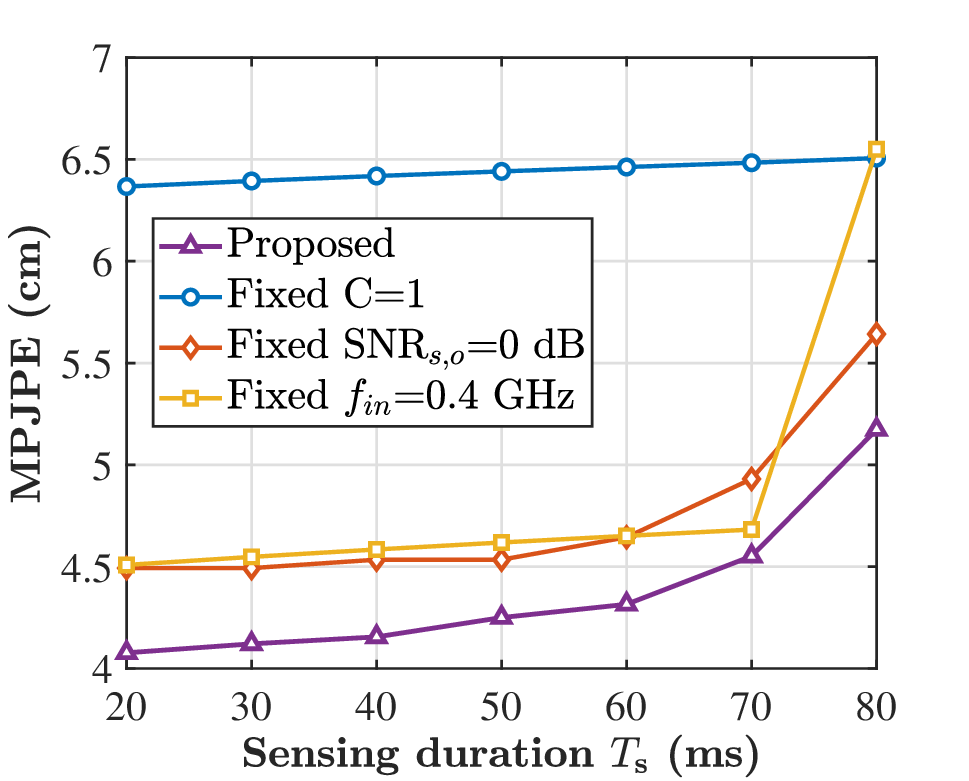}
        \caption{}
        \label{Tsensing}
    \end{subfigure}
    \hfill
    \begin{subfigure}[t]{0.33\textwidth}
        \centering
        \includegraphics[width=\linewidth]{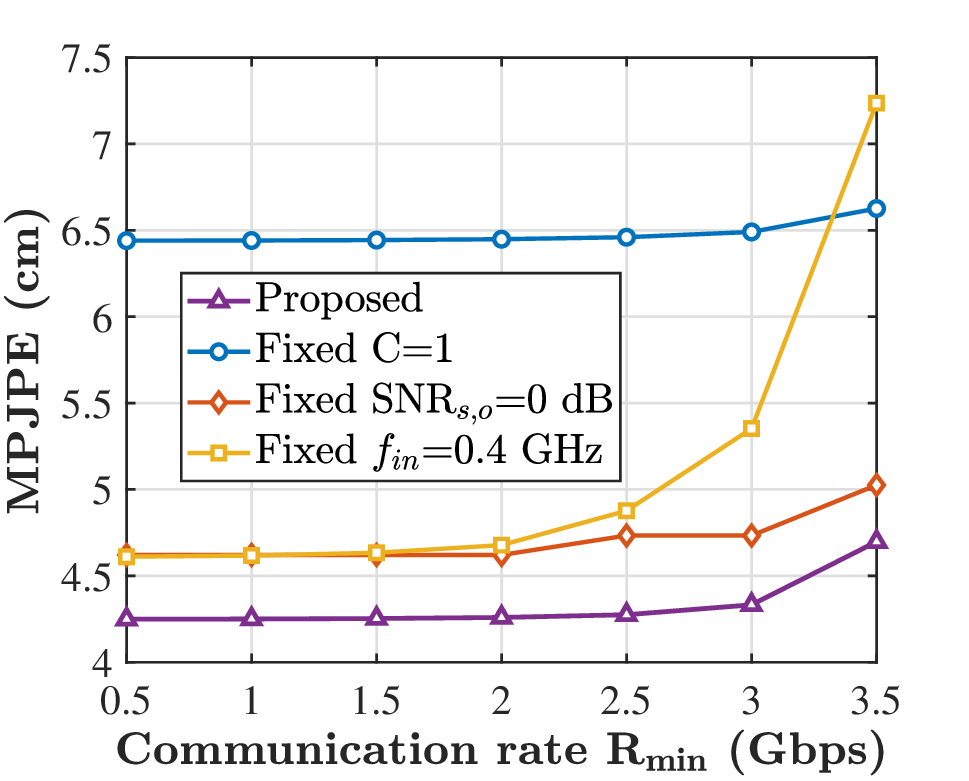}
        \caption{}
        \label{Rcommunciation}
    \end{subfigure}
    \caption{MPJPE performance versus average power budget, sensing duration, and minimum communication rate.}
    \label{fig:optEthree}
    \vspace{0em}
\end{figure*}

Although the improvement in average prediction performance is limited, the CRB-guided perturbation is designed to expose the model to sensing uncertainty during training. 

To evaluate its robustness, we compare the models trained with and without CRB-guided perturbation under different SNR levels during testing.
As shown in Fig.~\ref{fig:two_subfiguresabout}(\subref{noise}), the proposed model with CRB-guided perturbation maintains a consistently low MPJPE across the entire SNR range, indicating strong robustness to sensing uncertainty. In contrast, the model trained without CRB-guided perturbation is highly sensitive to SNR variations, especially in the low-SNR regime. At 0~dB, the proposed method reduces the MPJPE from about 17.6~cm to 4.6~cm, corresponding to an improvement of approximately 73.9\%. Even at 30~dB, a noticeable gain is still observed, with the MPJPE reduced from about 5.8~cm to 4.0~cm, i.e., an improvement of about 31.0\%. This confirms that the proposed perturbation strategy not only improves low-SNR scenarios, but also provides stable performance gains over the entire sensing quality range. Fig.~\ref{fig:two_subfiguresabout}(\subref{joints}) further visualizes the predicted human pose under different SNR levels, demonstrating the robustness of the proposed CRB-guided strategy under varying sensing conditions.

\subsection{Performance of the ISCC System}

We compare the proposed adaptive scheme with three baselines, labeled \textbf{fixed} $\mathbf{C=1}$, \textbf{fixed $\boldsymbol{\mathrm{SNR}_{s,o}=0}$~dB}, and
\textbf{fixed $\boldsymbol{f_{\mathrm{in}}=0.4}$~GHz}. These baselines isolate the effects of model inference depth, sensing quality, and computation frequency allocation, respectively. The fixed $C=1$ baseline uses the minimum inference depth and prioritizes sensing quality, while the fixed $\mathrm{SNR}_{s,o}=0$~dB baseline only meets the minimum sensing QoS requirement and prioritizes computation resources. The fixed $f_{\mathrm{in}}=0.4$~GHz baseline keeps the computation frequency constant, where the computation cost is adjusted only through the selected model inference depth. 

As shown in Fig.~\ref{fig:optEthree}(\subref{total energy}), the MPJPE generally decreases as the average power budget increases, since additional power provides more resources for improving sensing quality and supporting a more effective inference configuration. The proposed scheme consistently achieves the lowest MPJPE. {In the low-power region, it reduces the MPJPE to about $5.4$~cm, compared with around $6.8$~cm for the fixed $C=1$ baseline, corresponding to an error reduction of approximately $21\%$.
Among the baselines, the fixed $f_{\mathrm{in}}=0.4$~GHz baseline is infeasible at $P_\mathrm{max}=1$~W because its high fixed inference frequency incurs excessive computation energy, while its MPJPE decreases rapidly as the power budget increases because the relaxed energy constraint allows the fixed-frequency configuration to support deeper inference.}
The fixed $C=1$ scheme changes only slightly with the power budget because the minimum model inference depth limits the inference capability. The fixed $\mathrm{SNR}_{s,o}=0$~dB scheme achieves lower MPJPE than the fixed $C=1$ scheme, but its improvement is limited by the minimum sensing-quality constraint.

As shown in Fig.~\ref{fig:optEthree}(\subref{Tsensing}), the MPJPE increases with the sensing duration $T_s$. This is because a larger $T_s$ decreases inference time while increasing inference frequency. 
The proposed adaptive scheme consistently achieves the lowest MPJPE over the whole range of $T_s$. When $T_s$ increases from $20$ ms to $60$ ms, the proposed scheme keeps the MPJPE around $4.1$--$4.3$~cm, while the fixed $C=1$ baseline remains above $6.4$~cm, corresponding to an error reduction of about $33\%$. Even at $T_s=80$~ms, the proposed scheme still achieves a lower MPJPE than the other baselines.  {The fixed $f_\mathrm{in}=0.4$~GHz baseline remains flat at small $T_s$ because sufficient inference time is still available to maintain a similar model depth, but rises sharply once the remaining inference time becomes insufficient under the fixed inference frequency.} This verifies that the proposed scheme can effectively allocate resources to maintain robust pose prediction performance.

{As shown in Fig.~\ref{fig:optEthree}(\subref{Rcommunciation}), the MPJPE increases with the communication rate $R_{\min}$, especially in the high-rate region. A larger $R_{\min}$ forces the system to allocate more resources to communication, reducing the resources available for sensing and computation. The proposed scheme consistently achieves the lowest MPJPE. It maintains an MPJPE of about $4.3$~cm when $R_{\min}$ is below $2.5$~Gbps, reducing the error by approximately $35\%$ compared with the fixed $C=1$ baseline and about $9\%$ compared with the fixed $f_{\mathrm{in}}=0.4$~GHz baseline. Even when $R_{\min}$ increases to $3.5$~Gbps, the proposed scheme still outperforms all baselines. This verifies that adaptive joint allocation can better balance communication QoS and sensing accuracy under stringent rate requirements.}
\begin{figure}[!t]
    \centering
    \includegraphics[width=0.4\textwidth]{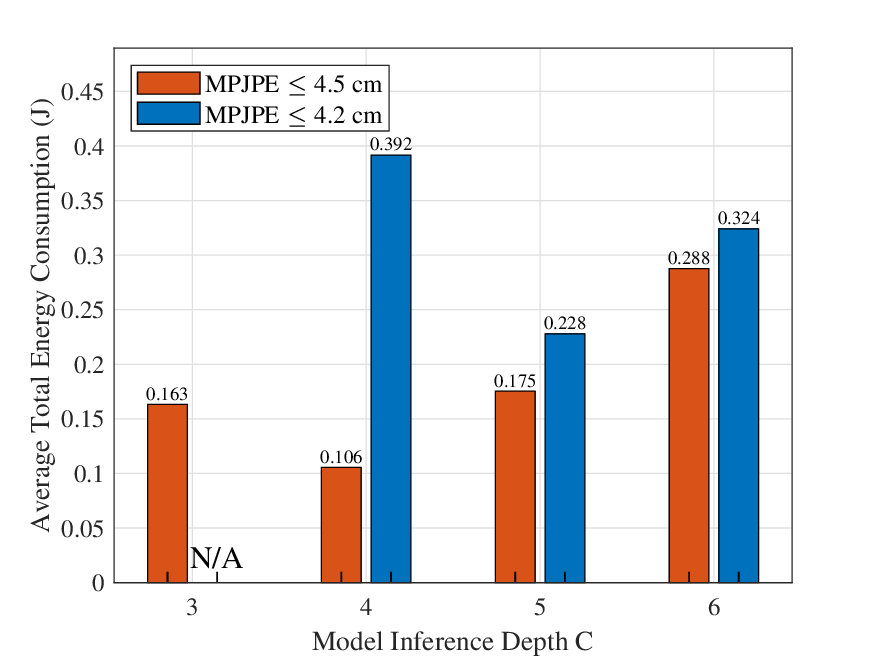}  
    \caption{Average total energy consumption under different MPJPE requirements and model inference depths.}
    \label{fig:depth}
    \vspace{0em}
\end{figure}

{To investigate the energy consumption required to achieve different prediction performance, Fig.~\ref{fig:depth} compares the average total energy consumption under two MPJPE targets, i.e., $4.5$~cm and $4.2$~cm, under different values of $C$. N/A means that the corresponding model inference depth cannot satisfy the MPJPE target of $4.2$~cm, indicating the limited prediction capability of shallow models.
The results show that a stricter performance requirement generally leads to higher energy consumption. For example, when $C=5$, reducing the MPJPE target from $4.5$~cm to $4.2$~cm increases the energy consumption from $0.175$~J to $0.228$~J. However, the energy consumption does not vary monotonically with the model inference depth under the same MPJPE requirement. For instance, under the MPJPE target of $4.5$~cm, increasing the model inference depth from $C=3$ to $C=4$ decreases the energy consumption from $0.163$~J to $0.106$~J, whereas further increasing $C$ leads to higher energy consumption. This is because a shallow model requires more sensing resources to compensate for its limited inference capability, while an excessively deep model introduces additional computation energy consumption. Therefore, these results demonstrate the importance of selecting an appropriate model inference depth to balance sensing and computation energy under different prediction accuracy requirements.}
\begin{figure}[!t]
    \centering    \includegraphics[width=0.75\linewidth]{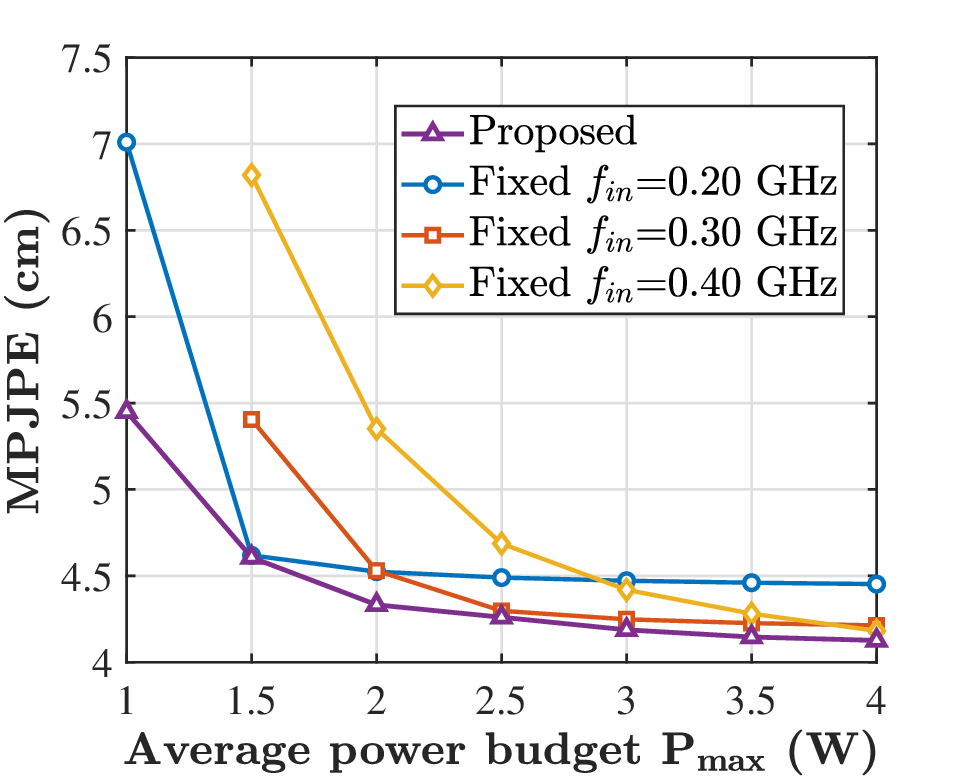}
    \caption{MPJPE performance versus $P_{\mathrm{max}}$ under different inference frequencies.}
    \label{fig:frequency}
    \vspace{-1.5em}
\end{figure}

{To investigate the impact of inference frequency on prediction performance, Fig.~\ref{fig:frequency} compares the proposed scheme with several fixed-frequency baselines.
When $P_{\max}=1.5$~W, the proposed scheme reduces the MPJPE by about $14\%$ and $32\%$ compared with the fixed $f_{\mathrm{in}}=0.30$~GHz and $f_{\mathrm{in}}=0.40$~GHz baselines, respectively. When $P_{\max}=1$~W, these two baselines become infeasible because their relatively high fixed inference frequencies incur excessive computation energy under the stringent power budget. For the fixed $f_{\mathrm{in}}=0.20$~GHz baseline, the MPJPE decreases rapidly as $P_{\max}$ increases from $1$ to $1.5$~W, since the additional energy allows a deeper model to be adopted. When $P_{\max}\geq 2$~W, the performance gradually saturates because the low fixed inference frequency becomes limited by the inference latency.}

{To investigate the impact of model inference depth on prediction performance, Fig.~\ref{fig:depthC} compares the proposed scheme with several fixed-depth baselines. It can be observed that a larger model inference depth generally leads to lower MPJPE, since deeper inference can extract more effective temporal and spatial features for pose prediction. For example, when $P_{\max}=2$~W, the proposed method reduces the MPJPE by about $33\%$, $15\%$, and $4\%$ compared with the fixed $C=1$, $C=2$, and $C=3$ baselines, respectively. When $P_{\max}=1$~W, the fixed $C=3$ baseline becomes infeasible due to its higher computation requirement.}

\begin{figure}[!t]
    \centering
    \includegraphics[width=0.75\linewidth]{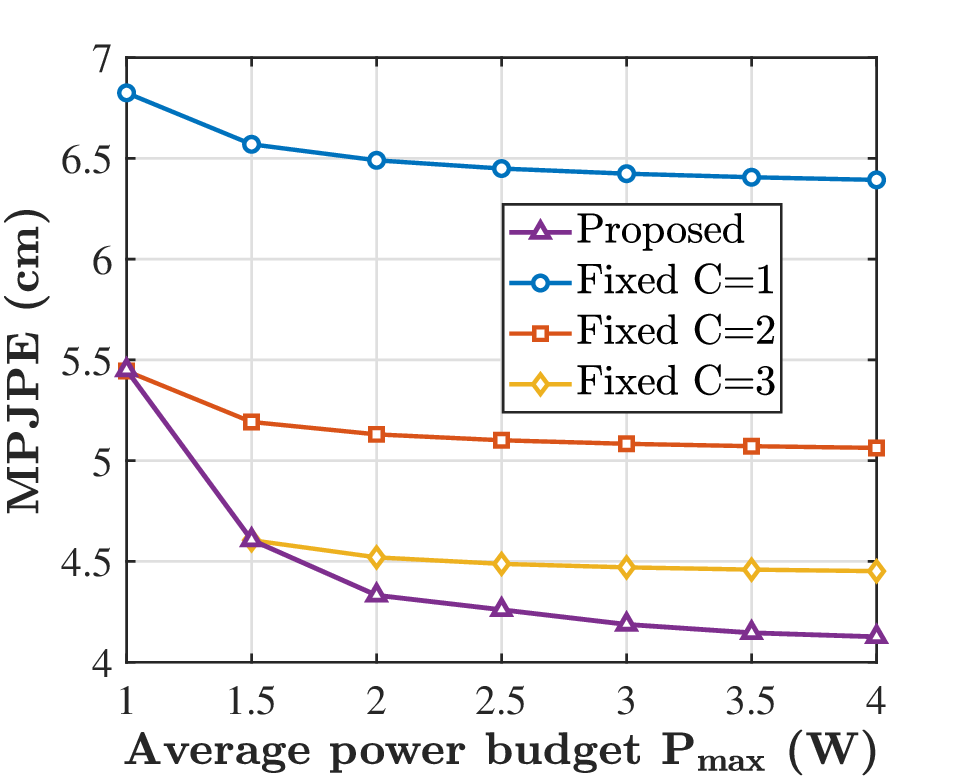}
    \caption{MPJPE performance versus $P_{\mathrm{max}}$ under different model inference depths.}
    \label{fig:depthC}
    \vspace{-1.5em}
\end{figure}

\section{Conclusion}

In this paper, we have proposed a CRB-guided sensing framework for indoor ISCC systems with human pose prediction. We have developed ET-Mamba as a lightweight prediction model that supports adaptive-depth inference under resource-constrained conditions. We have also designed a CRB-guided anisotropic perturbation strategy to simulate point-cloud perturbations under different sensing SNR levels. We have formulated a joint optimization of the beamforming matrix, model inference depth, and computation frequency to minimize the pose prediction error, which is the first study to jointly consider sensing quality, model inference depth and resource allocation in mmWave ISCC systems. To address this issue, we have proposed an alternating optimization
scheme. First, we have established an empirical relationship among pose prediction error, sensing SNR, and model inference depth. Subsequently, we have proposed an AO-based algorithm, where closed-form updates and SDP were combined to efficiently solve the problem. Simulation results show that the proposed method reduces the MPJPE by up to approximately 35\% compared with the baseline, verifying the effectiveness of conducting joint sensing, communication, and computation design in ISCC systems. Future work will extend the proposed framework to multi-person scenarios and online adaptation with real-time sensing feedback.
\bibliographystyle{ieeetr}
\bibliography{bib}

\end{document}